\newtheorem{theorem}{Theorem}
\newtheorem*{theorem*}{Theorem}
\newtheorem{lemma}{Lemma}
\newtheorem*{lemma*}{Lemma}
\newenvironment{repthm}[1]
  {\innercustomthm}
  {\endinnercustomthm}
\newenvironment{replem}[1]
  {\innercustomlemma}
  {\endinnercustomlemma}
\definecolor{shadecolor}{RGB}{255,255,128}
\definecolor{shadecolor}{gray}{0.92}
\newcommand{\toP}{\stackrel{\scriptscriptstyle P}{\to}}
\DeclareMathOperator*{\plim}{plim}
\newcommand{\eps}{\varepsilon}
\title{Simultaneous Control of All False Discovery Proportions in Large-Scale Multiple Hypothesis Testing}
\author{Jelle J. Goeman$^{1*}$ \and Rosa J. Meijer$^{2}$ \and Thijmen J.P. Krebs$^3$ \and Aldo Solari$^4$}
\begin{document}

\maketitle
\footnotetext[1]{Leiden University Medical Center, Department of Medical Statistics and Bioinformatics, Leiden, The Netherlands}
\footnotetext[2]{Statistics Netherlands, Den Haag, The Netherlands}
\footnotetext[3]{Delft University of Technology, Delft, The Netherlands}
\footnotetext[4]{University of Milano-Bicocca, Department of Economics, Management and Statistics, Milan, Italy}
\renewcommand{\thefootnote}{\fnsymbol{footnote}}
\footnotetext[1]{Corresponding author. Email: j.j.goeman@lumc.nl}

\begin{abstract}
  Closed testing procedures are classically used for familywise error rate (FWER) control, but they can also be used to obtain simultaneous confidence bounds for the false discovery proportion (FDP) in all subsets of the hypotheses. In this paper we investigate the special case of closed testing with Simes local tests. We construct a novel fast and exact shortcut which we use to investigate the power of this method when the number of hypotheses goes to infinity. We show that, if a minimal amount of signal is present, the average power to detect false hypotheses at any desired FDP level does not vanish. Additionally, we show that the confidence bounds for FDP are consistent estimators for the true FDP for every non-vanishing subset. For the case of a finite number of hypotheses, we show connections between Simes-based closed testing and the procedure of Benjamini and Hochberg.
\end{abstract}

\section{Introduction}

Multiple testing methods controlling familywise error (FWER) or false discovery rate (FDR) differ sharply with respect to their power properties for large-scale multiple testing problems. As the number of hypotheses $m$ increases, the power per hypothesis of FWER-controlling procedures such as Bonferroni vanishes. In contrast, if sufficient signal is present, FDR-controlling procedures such as BH \citep{Benjamini1995} reject each false hypothesis with non-vanishing power \citep{Chi2007}.

The method of closed testing \citep{Marcus1976} is a cornerstone of FWER control. Many well-known FWER-controlling methods can be constructed as special cases of this general method. However, \cite{Goeman2011} showed that closed testing procedures can also be used to provide simultaneous upper confidence bounds for the false discovery proportion (FDP) in all $2^m$ subsets of the hypotheses. These additional inferences come for free in addition to the FWER-statements traditionally obtained from closed testing. However, they drastically increase the scope of the closed testing procedure: simultaneous confidence statements for all subsets implies FWER control over exponentially many null hypotheses, each bounding the FDP (i.e.\ the proportion of true null hypotheses) in a subset by a constant.

The question now arises what is the behavior of FDP confidence bounds from a closed testing procedure as the number of hypotheses grows. On the one hand, for a method with FWER control over exponentially many hypotheses it may be expected that interesting statements will vanish relative to the size of the multiple testing problem as with FWER-based methods. On the other hand, we may also expect non-vanishing power, because, like FDR, the FDP criterion scales well with the size of the multiple testing problem.

We investigate this question for the important case of closed testing with Simes local tests. This procedure is valid whenever the Simes inequality holds for the set of all true hypotheses, which happens under independence of $p$-values and certain forms of positive dependence \citep{Sarkar2008, Finner2014}. The same assumption is also necessary for the validity of BH as an FDR-controlling method. Closed testing with Simes local tests is the basis of the well-known FWER-controlling procedures of \cite{Hommel1988} and \cite{Hochberg1988}, procedures that are strictly more powerful than Bonferroni-Holm, but not by a large margin  \citep{Goeman2014}. Since the Simes test is non-consonant, it is known that the FDP upper confidence bounds from this method are sometimes smaller than those arising trivially from the FWER results \citep{Goeman2011, Goeman2014}.

The structure of the paper is as follows. We first derive a novel exact shortcut that allows quick computation of FDP confidence bounds in large problems. This shortcut improves upon earlier shortcuts for this method by \cite{Goeman2011} in speed, power and flexibility. From this shortcut we prove a result on the tail probabilities of FDP of BH-rejected sets, improving upon an earlier result by \cite{Lehmann2005}. This establishes a strong connection between BH and Simes-based closed testing, which suggests that the FDP-bounds might scale like BH in large multiple testing problems. To formalize this, we assume an Efron-style mixture model \citep{Efron2012} for the $p$-values, and show that a similar result to that of \cite{Chi2007} also holds for Simes-based closed testing: provided a minimal amount of signal is present, the average power to detect each false hypothesis does not vanish as $m$ increases. Finally, our main result is that under Efron's model the FDP upper confidence bound is a consistent estimator for the FDP. Crucially, this consistency holds when $m$ goes to infinity at an arbitrary rate with the sample size $n$ as long as the selected set is not too small. All proofs are found in the appendix.

\section{Closed testing and FDP confidence}

We briefly revisit \cite{Goeman2011} to set the stage and introduce notation. Many quantities that will be introduced depend on the significance level $\alpha$. When we introduce quantities we will often subscript them by $\alpha$ to make this explicit, but we will generally suppress such subscripts for brevity when it is unambiguous.

Suppose we have $m$ hypotheses $H_1, \ldots, H_m$ that we are interested in testing. Some of the hypotheses are true, while others are false. Denote by $T \subseteq \{1,\ldots,m\}$ the index set of true hypotheses. The aim of a multiple testing procedure is to come up with an index set $S \subseteq \{1,\ldots,m\}$ of rejected or `selected' hypotheses, the \emph{discoveries}. This selection should avoid \emph{false discoveries}, which are hypotheses with indices in $T \cap S$. We denote the number of false discoveries for a selected set $S$ as
\[
\tau(S) = |T \cap S|,
\]
where $|\cdot|$ denotes the size of a set.

\cite{Goeman2011} show how to calculate upper confidence bounds $t_\alpha(S)$ for $\tau(S)$ from the closed testing procedure of \cite{Marcus1976}. In the general closed testing procedure, the collection of hypotheses is augmented with all possible intersection hypotheses $H_I=\bigcap_{i\in I} H_i$, with $I\subseteq\{1,\ldots,m\}$. An intersection hypothesis $H_I$ is true if and only if $H_i$ is true for all $i\in I$. Note that $H_i = H_{\{i\}}$, so that all original hypotheses, known as \emph{elementary hypotheses}, are also intersection hypotheses.

The closed testing procedure starts by testing all intersection hypotheses with a \emph{local test}. An intersection hypothesis $H_I$ is subsequently rejected by the closed testing procedure if and only if all intersection hypotheses $H_J$ with $I\subseteq J$ are rejected by the local test. It will be helpful to use the notation $\mathcal{I} = 2^{\{1,\ldots,m\}}$ for the collection of all subsets of $\{1,\ldots,m\}$ and $\mathcal{T} = \{I \in \mathcal{I}\colon I \subseteq T\}$ for the collection of index sets corresponding to true intersection hypotheses. We define $\mathcal{U}_\alpha$ as the collection of all $I \in \mathcal{I}$ such that $H_I$ is rejected by the local test (defined in the next section). For each $I$, $H_I$ is rejected by the closed testing procedure if and only if $I \in \mathcal{X}_\alpha$, where
\[
\mathcal{X}_\alpha = \{I \in \mathcal{I}\colon \textrm{$J \in \mathcal{U}_\alpha$ for all $J \supseteq I$}\}.
\]

We will assume that the local test rejects $H_T$, the intersection of all true hypotheses, with probability at most $\alpha$:
\begin{equation} \label{eq_ass_U}
\mathrm{P}(T \notin \mathcal{U}_\alpha) \geq 1-\alpha.
\end{equation}
Under this assumption the closed testing procedure controls the familywise error rate for all intersection hypotheses $H_I$, that is $\mathrm{P}(\mathcal{T} \cap \mathcal{X}_\alpha = \emptyset) \geq 1-\alpha$. This follows because $T \notin \mathcal{U}_\alpha$ implies that $I \notin \mathcal{X}_\alpha$ for every $I \subseteq T$, which in turn implies $\mathcal{T} \cap \mathcal{X}_\alpha = \emptyset$.

For the confidence bound given by
\begin{equation}\label{t_alpha}
t_\alpha(S) = \max \{|I|\colon I\subseteq S, I \notin \mathcal{X}_\alpha\},
\end{equation}
\cite{Goeman2011} prove that
\begin{equation} \label{eq_confidence}
\mathrm{P}(\textrm{$\tau(S) \leq t_\alpha(S)$ for all $S \in \mathcal{I}$}) \geq 1-\alpha.
\end{equation}
This result holds for any closed testing procedure, i.e.\ for any $\mathcal{U}$ with the property (\ref{eq_ass_U}). The bounds can also be formulated in terms of the false discovery proportion (FDP), given by $\pi(S) = \tau(S)/|S|$ if $S\neq\emptyset$, and 0 otherwise. We find
\[
q_\alpha(S) = \frac{t_\alpha(S)}{|S|},
\]
also defined as 0 if $S=\emptyset$, as an immediate simultaneous upper $(1-\alpha)$-confidence bound for $\pi(S)$. We denote $d_\alpha(S) = |S|-t_\alpha(S)$ as the simultaneous lower $(1-\alpha)$-confidence bound for the number of true discoveries $|S|-\tau(S)$.

As argued by \cite{Goeman2011}, a data analyst can study the data, choose any set $S$ of interest and use (\ref{t_alpha}) to obtain a confidence bound for the number of false discoveries in that set. The choice of $S$ can be revised as many times as desired, since the simultaneity of the confidence bounds over $S$ guarantees their post hoc validity. Moreover, the simultaneity over $S$ also guarantees FWER control if multiple sets are reported.

\section{Simes and non-consonant rejections} \label{sec_Non-consonant}

In the remainder of this paper we study $t_\alpha(S)$ and its properties for the special case that the local test is a Simes test. The Simes test is based on the $p$-values of elementary hypotheses. We suppose that we have raw (unadjusted) $p$-values $p_1,\ldots,p_m$ for the hypotheses $H_1,\ldots H_m$. The Simes local test rejects an intersection hypothesis $H_I$ (i.e.\ $I \in \mathcal{U}_\alpha$) if and only if there is at least one $1 \leq i \leq |I|$ for which
\begin{equation} \label{eq_Simestest}
|I|p_{(i\mathbin{:}I)} \leq i\alpha,
\end{equation}
where for any $I \in \mathcal{I}$ and $1 \leq i \leq |I|$, we define $p_{(i\mathbin{:}I)}$ as the $i$th smallest $p$-value among the multiset $\{p_i\colon i\in I\}$. Equivalently, for $I \neq \emptyset$, Simes rejects $H_I$ whenever
\[
\min_{1\leq i \leq |I|}p_{(i\mathbin{:}I)}/i \leq \alpha/|I|.
\]

In the case of the Simes local test assumption (\ref{eq_ass_U}) is better known as the \emph{Simes inequality}:
\begin{equation}\label{Simesineq}
\mathrm{P}\big(\textrm{$|T| p_{(i\mathbin{:}T)} \leq i\alpha$ for at least one $i\in\{1,\ldots,|T|\}$}\big)\leq \alpha.
\end{equation}
The Simes inequality holds for independent $p$-values and under certain forms of positive correlation. There is extensive and growing literature on sufficient conditions for the Simes inequality, which we will not revisit here \citep{Benjamini2001, Rodland2006, Sarkar2008, Finner2014, Bodnar2017}. Assumption (\ref{Simesineq}) is also necessary for the validity of FDR control for the procedure of \cite{Benjamini1995}.

The literature on closed testing makes a distinction between \emph{consonant} and \emph{non-consonant} procedures \citep{Henning2015, Brannath2010, Bittman2009, Gou2014}. A consonant closed testing procedure has the property that for every $I$ rejection of $H_I$ by the closed testing procedure implies rejection of $H_i$ for at least one $i \in I$. Procedures without this property are non-consonant. From the perspective of FWER control, consonance is a desirable property. However, from the perspective of FDP control this is not necessarily the case.

Non-consonant procedures can have FDP confidence bounds that are not trivially implied by FWER control. Let
\begin{equation} \label{def_R}
R_\alpha = \{1\leq i \leq m\colon \{i\} \in \mathcal{X}_\alpha\},
\end{equation}
be the index set of elementary hypotheses rejected by the closed testing procedure. For the case of Simes local tests $R$ corresponds to the rejected set of \cite{Hommel1988}. It is easy to see that in general
\begin{equation} \label{eq_SR}
t_\alpha(S) \leq |S \setminus R|,
\end{equation}
the trivial bound on the number of false discoveries implied by the FWER control property of $R$. We have $t_\alpha(S) < |S \setminus R|$ for at least one $S$ if and only if at least one non-consonant rejection has been made, where a non-consonant rejection is a rejection of $H_I$ for which no $H_i$, $i\in I$ are rejected.

If $t_\alpha(S) < |S \setminus R|$ this implies that there are some false null hypotheses in $S$ that we can detect, but not pinpoint at the same confidence level: we know they are present in $S$, but we cannot say where in $S$ they are. This relates closely to the difference between estimable and detectable effects, as discussed by \cite{Donoho2004}.

The Simes test can make non-consonant rejections. A simple example, illustrated in Figure \ref{consonant}, serves to demonstrate this. Take for example $m=4$, with $\alpha/2 < p_1 \leq p_2 \leq p_3 \leq 2\alpha/3$, and $p_4 > \alpha$. Then it is easy to verify that the Simes test rejects $H_I$ for all $I$ except $I=\{1,4\}$, $\{2,4\}$, $\{3,4\}$ and $\{4\}$. Consequently, the closed testing procedure rejects $H_I$ with $I=\{1,2\}$, $\{1,3\}$ and $\{2,3\}$, and all $H_I$ with $|I|>2$, but fails to reject $H_1$, $H_2$, $H_3$ as well as those intersection hypotheses for which the Simes test failed. The end result is clearly non-consonant. We have $R=\emptyset$, and for e.g.\ $S = \{1,2,3\}$ we have $t_\alpha(S) = 1$ while $|S \setminus R|=3$.

The example of Figure~\ref{consonant} may seem like a contrived example. Indeed, non-consonant rejections in Hommel's procedure are rare if $m$ is small. They occur very frequently if $m$ is large, however, as we shall see later in this paper.

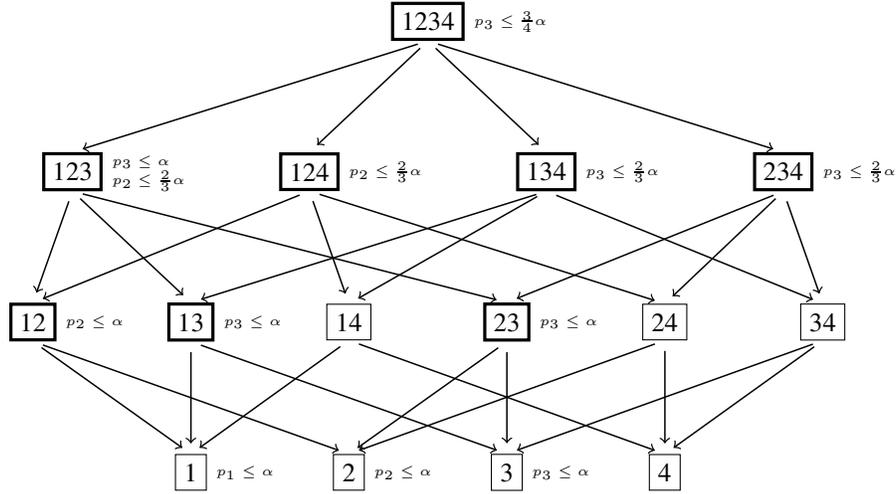
\begin{figure}[!ht]
\centering
\begin{tikzpicture}[xscale=1.05]
  \path (7,8) node[very thick, draw, label={right:\tiny $p_3 \leq \frac34\alpha$}] (1234) {1234};
  \path (2.5,6) node[draw, very thick, label={right:\tiny $\!\!\!\!\begin{array}{l} p_3 \leq \alpha\\p_2 \leq \frac23\alpha \end{array}$}] (123) {123};
  \path (5.5,6) node[draw, very thick, label={right:\tiny $p_2 \leq \frac23\alpha$}] (124) {124};
  \path (8.5,6) node[draw, very thick, label={right:\tiny $p_3 \leq \frac23\alpha$}] (134) {134};
  \path (11.5,6) node[draw, very thick, label={right:\tiny $p_3 \leq \frac23\alpha$}] (234) {234};
  \path (2,4) node[draw, very thick, label={right:\tiny $p_2 \leq \alpha$}] (12) {12};
  \path (4,4) node[draw, very thick, label={right:\tiny $p_3 \leq \alpha$}] (13) {13};
  \path (6,4) node[draw] (14) {14};
  \path (8,4) node[draw, very thick, label={right:\tiny $p_3 \leq \alpha$}] (23) {23};
  \path (10,4) node[draw] (24) {24};
  \path (12,4) node[draw] (34) {34};
  \path (4,2) node[draw, label={right:\tiny $p_1 \leq \alpha$}] (1) {1};
  \path (6,2) node[draw, label={right:\tiny $p_2 \leq \alpha$}] (2) {2};
  \path (8,2) node[draw, label={right:\tiny $p_3 \leq \alpha$}] (3) {3};
  \path (10,2) node[draw] (4) {4};

  \begin{scope}[shorten >= 4pt,shorten <= 4pt, latex-, ->, line width=0.2mm]
  \draw  (1234.south) -- (123.north);
  \draw  (1234.south) -- (124.north);
  \draw  (1234.south) -- (134.north);
  \draw  (1234.south) -- (234.north);
  \draw  (123.south) -- (12.north);
  \draw  (123.south) -- (13.north);
  \draw  (123.south) -- (23.north);
  \draw  (124.south) -- (12.north);
  \draw  (124.south) -- (14.north);
  \draw  (124.south) -- (24.north);
  \draw  (134.south) -- (13.north);
  \draw  (134.south) -- (14.north);
  \draw  (134.south) -- (34.north);
  \draw  (234.south) -- (23.north);
  \draw  (234.south) -- (24.north);
  \draw  (234.south) -- (34.north);
  \draw  (12.south) -- (1.north);
  \draw  (12.south) -- (2.north);
  \draw  (13.south) -- (1.north);
  \draw  (13.south) -- (3.north);
  \draw  (14.south) -- (1.north);
  \draw  (14.south) -- (4.north);
  \draw  (23.south) -- (2.north);
  \draw  (23.south) -- (3.north);
  \draw  (24.south) -- (2.north);
  \draw  (24.south) -- (4.north);
  \draw  (34.south) -- (3.north);
  \draw  (34.south) -- (4.north);
  \end{scope}
\end{tikzpicture}
\caption{Example of non-consonant rejections in Hommel's procedure based on $\alpha/2 < p_1  \leq p_2 \leq p_3 \leq 2\alpha/3$, and $p_4 > \alpha$. Nodes in the graph represent intersection hypotheses, labeled by their corresponding index set. Edges represent subset relationships. Hypotheses rejected by the Simes test have the reason(s) for their rejection given next to them. Hypotheses rejected by the closed testing procedure are marked in bold. A non-consonant rejection is e.g.\ $H_{\{1,3\}}$, since it is rejected but $H_1$ and $H_3$ are not.} \label{consonant}
\end{figure}

\section{A shortcut} \label{sec:shortcut}

Naive application of a closed testing procedure requires $2^m$ local tests to be performed, which severely limits the usefulness of the procedure in large problems. For this reason \emph{shortcuts} have been developed, which are algorithms that limit the computation time of the closed testing procedure. Shortcuts come in two flavors: exact shortcuts, which lead to the same rejections as the closed testing procedure, and approximate or conservative shortcuts, which lead to at most the same rejections as the closed testing procedure, but possibly fewer. Shortcuts always apply to specific local tests; there are no general shortcuts for the closed testing procedure.

\cite{Hommel1988} provided the first shortcut for closed testing with Simes local tests, but that shortcut only applies to elementary hypotheses. \cite{Goeman2011, Goeman2013} presented an approximate shortcut for FDP confidence bounds for closed testing with Simes local tests. This shortcut only applies to sets $S$ consisting of the $j$ smallest $p$-values for some $j$. We abbreviate $p_{(i)} = p_{(i\mathbin{:}\{1,\ldots,m\})}$ as usual. Let $r_1, \ldots, r_m$ be a permutation of $1,\ldots, m$ such that $p_{r_i} = p_{(i)}$ for all $1 \leq i \leq m$. We define for $0 \leq i \leq m$ the set corresponding to the $i$ smallest $p$-values as
\[
L_i = \{r_1, r_2, \ldots, r_i\}.
\]
\cite{Goeman2011} showed that $t_\alpha(L_j) \leq |S|-d$ if there is a $d \leq i \leq j$ such that $(m-d+1)p_{(i)} \leq (i-d+1)\alpha$. Applying the shortcut to the example in Figure \ref{consonant}, we find for $L_3=\{1,2,3\}$ that it gives $t_\alpha(L_3) \leq 1$ since $3p_3 \leq 2\alpha$. However, the shortcut fails for $L_2=\{1,2\}$, demonstrating that the shortcut is only approximate. Computation time for this shortcut is quadratic in $j$.

In this section we provide a novel shortcut that improves upon the shortcut of \cite{Goeman2011} in three ways: the novel shortcut is (1.) applicable for all $S$; (2.) exact; and (3.) faster to compute.

Analogous to $L_i$ we define $K_i = \{r_{m-i+1}, \ldots, r_m\}$ for $i=0,\ldots,m$. This is the set of the $i$ largest $p$-values, the `worst case' index set of size $i$. From this we define
\begin{equation} \label{def_h}
h_\alpha = \max\{0 \leq i \leq m\colon K_i \notin \mathcal{U}_\alpha\}.
\end{equation}

The quantity $h$ was first defined by \cite{Hommel1988} and plays a large role in his FWER-controlling procedure. The most important property of $h$ is formalized in the following lemma, already known to Hommel, which says that $h$ is the largest size of an intersection hypothesis that is not rejected by the closed testing procedure.
\begin{lemma} \label{h}
$|I| > h$ implies $I \in \mathcal{X}$.
\end{lemma}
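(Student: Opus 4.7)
The plan is to unpack the definition of $\mathcal{X}_\alpha$ and reduce the statement to a worst-case comparison with the sets $K_i$. By definition, $I \in \mathcal{X}_\alpha$ means $J \in \mathcal{U}_\alpha$ for every $J \supseteq I$, so it suffices to fix an arbitrary $J \supseteq I$ and show that the Simes test rejects $H_J$. Let $k = |J|$; since $J \supseteq I$ we have $k \geq |I| > h$, and then the definition (\ref{def_h}) of $h$ as the maximum index with $K_i \notin \mathcal{U}_\alpha$ immediately gives $K_k \in \mathcal{U}_\alpha$. So we can assume Simes rejects $H_{K_k}$, meaning there exists $1 \leq i \leq k$ with $k\, p_{(i\mathbin{:}K_k)} \leq i\alpha$.

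The crux is then a purely combinatorial order-statistic inequality: for any set $J$ of size $k$ and every $1 \leq i \leq k$,
\[
p_{(i\mathbin{:}J)} \leq p_{(i\mathbin{:}K_k)}.
\]
This says that $K_k$, being composed of the $k$ largest $p$-values among $p_1,\ldots,p_m$, is the worst case for Simes among sets of cardinality $k$. I would justify it by observing that $p_{(i\mathbin{:}K_k)}$ equals the global order statistic $p_{(m-k+i)}$, whereas among the at least $m-k+i$ indices $j$ with $p_j \leq p_{(m-k+i)}$ at most $m-k$ can lie outside $J$, leaving at least $i$ inside $J$; hence $p_{(i\mathbin{:}J)} \leq p_{(m-k+i)} = p_{(i\mathbin{:}K_k)}$.

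Combining the two steps, the index $i$ furnished by rejection of $H_{K_k}$ also certifies rejection of $H_J$: $k\, p_{(i\mathbin{:}J)} \leq k\, p_{(i\mathbin{:}K_k)} \leq i\alpha$, so $J \in \mathcal{U}_\alpha$ via the Simes criterion (\ref{eq_Simestest}). Since $J \supseteq I$ was arbitrary, we conclude $I \in \mathcal{X}_\alpha$.

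I do not foresee a serious obstacle. The only slightly delicate point is the order-statistic inequality, which is intuitive but deserves the counting argument sketched above; everything else is bookkeeping with the definitions of $h$, $K_i$, $\mathcal{U}_\alpha$, and $\mathcal{X}_\alpha$.
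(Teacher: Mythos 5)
Your proof is correct; the paper declares this lemma trivial and omits the argument, and what you have written is precisely the intended reasoning: $K_{|J|}$ is the worst case among sets of size $|J|$ because $p_{(i\mathbin{:}J)} \leq p_{(i\mathbin{:}K_{|J|})} = p_{(m-|J|+i)}$, so rejection of $H_{K_{|J|}}$ forces rejection of every $H_J$ with $|J| > h$. Your counting justification of the order-statistic inequality is also sound.
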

The proof of this lemma is trivial. Combining Lemma \ref{h} with (\ref{t_alpha}) it immediately follows that $t_\alpha(\{1,\ldots,m\}) = h$. Consequently
\begin{equation} \label{hatpi}
\hat\pi_\alpha = h_\alpha/m
\end{equation}
is an $(1-\alpha)$-upper confidence bound for the proportion of true null hypotheses in the multiple testing problem, a quantity for which many conservative estimates are available, but few true confidence bounds \citep[e.g.][]{Langaas2005, Nettleton2006}.

The value of $h$ can also be used to quickly determine whether a hypothesis $H_I$ can be rejected within the closed testing procedure, as shown by the following lemma. This lemma was motivated by a flow chart of Hommel (\citeyear[Fig.\ 3]{hommel1986}). It is a generalization of the procedure of \cite{Hommel1988} to intersection hypotheses. Hommel's procedure is recovered if we apply Lemma \ref{CTeasy} to singleton sets.

\begin{lemma} \label{CTeasy}
$I \in \mathcal{X}$ if and only if there exists some $1 \leq i \leq |I|$ such that
$h p_{(i\mathbin{:}I)} \leq i\alpha$.
\end{lemma}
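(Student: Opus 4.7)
My plan is to prove the two implications separately; the forward implication is routine, while the reverse relies on a careful combinatorial construction.

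For $(\Leftarrow)$, suppose $hp_{(i^*\mathbin{:}I)} \leq i^*\alpha$ for some $1 \leq i^* \leq |I|$, and take an arbitrary $J \supseteq I$ to show $J \in \mathcal{U}$. If $|J| > h$, Lemma \ref{h} together with $\mathcal{X} \subseteq \mathcal{U}$ gives $J \in \mathcal{U}$. If $|J| \leq h$, then since $I \subseteq J$ and $i^* \leq |I| \leq |J|$ we have $p_{(i^*\mathbin{:}J)} \leq p_{(i^*\mathbin{:}I)}$, so
\[
|J|\,p_{(i^*\mathbin{:}J)} \;\leq\; h\,p_{(i^*\mathbin{:}I)} \;\leq\; i^*\alpha,
\]
which is the Simes rejection condition for $H_J$. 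Hence $I \in \mathcal{X}$.

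For $(\Rightarrow)$, I argue the contrapositive: assume $hp_{(i\mathbin{:}I)} > i\alpha$ for every $i \in \{1,\ldots,|I|\}$ and produce $J \supseteq I$ with $J \notin \mathcal{U}$. The case $|I| > h$ is vacuous: Lemma \ref{h} gives $I \in \mathcal{U}$, yielding some $i$ with $|I|\,p_{(i\mathbin{:}I)} \leq i\alpha$, and combined with $h < |I|$ this forces $hp_{(i\mathbin{:}I)} \leq i\alpha$, contradicting the assumption. Assume therefore $|I| \leq h$ and set $J = I \cup B$, where $B$ consists of the $h-|I|$ indices outside $I$ with the largest $p$-values, so $|J| = h$. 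This is the natural ``hardest'' size-$h$ extension of $I$ for Simes to reject.

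It remains to verify $p_{(i\mathbin{:}J)} > i\alpha/h$ for every $i \in \{1,\ldots,h\}$. Let $\rho_1 < \cdots < \rho_h$ be the overall ranks of the elements of $J$. Either all of $\rho_1,\ldots,\rho_i$ are ranks of elements of $I$---in which case $\rho_i$ is the $i$th smallest rank in $I$, so $p_{(\rho_i)} = p_{(i\mathbin{:}I)} > i\alpha/h$ by hypothesis---or else $\ell \geq 1$ of them are ranks of elements of $B$. In the latter situation, because $B$ is the top $h - |I|$ of the $m - |I|$ ranks outside $I$, its $\ell$th smallest equals the $(m - h + \ell)$th smallest rank outside $I$ and is $\leq \rho_i$; counting separately the $i - \ell$ $I$-ranks and the $\geq m - h + \ell$ non-$I$ ranks all lying $\leq \rho_i$ yields
\[
\rho_i \;\geq\; (i-\ell) + (m - h + \ell) \;=\; m - h + i,
\]
so $\rho_i \in K_h$. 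Since $K_h \notin \mathcal{U}$ unpacks to $p_{(j)} > (j - m + h)\alpha/h$ for every $j \in \{m-h+1,\ldots,m\}$, applying this at $j = \rho_i$ gives $p_{(\rho_i)} > i\alpha/h$.

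The main obstacle is this counting step that promotes $\rho_i$ into $K_h$: one must keep careful track of which of the first $i$ ranks in $J$ come from $I$ and which from $B$, and exploit the definition of $B$ to pin down enough ranks outside $I$ below $\rho_i$. Once this is in place, the failure of Simes on $K_h$---the very property defining $h$---supplies the required inequality, and the contrapositive closes.
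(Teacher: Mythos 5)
Your proof is correct and follows essentially the same route as the paper: the backward direction is identical, and for the forward direction you construct the same witness set $J$ (the paper writes it as $I\cup K_j$ with $|J|=h$, which coincides with your $I\cup B$) and rely on the same key fact that $K_h\notin\mathcal{U}$. Your rank-counting argument showing $\rho_i\ge m-h+i$ is just a more explicit bookkeeping of the paper's direct identification of the order statistics of $J$ with those of $I$ and of $K_h$.
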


Note the similarity between the rule of Lemma \ref{CTeasy} and the Simes test (\ref{eq_Simestest}). Effectively the critical values of the local test are multiplied by a factor $h/|I|$ (for $I \neq \emptyset$), emphasizing that the multiple testing burden decreases with $|I|$. Thus, Lemma \ref{CTeasy} suggests that closed testing with Simes local tests is relatively powerful for intersections of many hypotheses, but may be not so powerful for intersections of few hypotheses, and in particular for elementary hypotheses. This is ironic, as it has so far almost exclusively been used for inference on elementary hypotheses.

Lemma \ref{CTeasy} is instrumental for deriving the actual shortcut for $t_\alpha(S)$, defined in (\ref{t_alpha}), given in the following theorem.

\begin{theorem} \label{shortcut}
If $S \neq\emptyset$ we have
\begin{equation} \label{eq_tS_short}
d_\alpha(S) = \max_{1\leq u \leq |S|} 1-u+|\{i\in S\colon hp_i \leq u\alpha\}|.
\end{equation}
\end{theorem}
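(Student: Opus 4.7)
The plan is to reduce the problem to a clean combinatorial inequality by applying Lemma \ref{CTeasy}. That lemma says $I \notin \mathcal{X}_\alpha$ if and only if $hp_{(i\mathbin{:}I)} > i\alpha$ for every $1 \le i \le |I|$, equivalently $|\{j\in I\colon hp_j \leq i\alpha\}| \leq i-1$ for each such $i$. Writing $s=|S|$ and $c_u = |\{j\in S\colon hp_j \leq u\alpha\}|$, the asserted identity is equivalent to $t_\alpha(S) = \min_{1\le u\le s}(s+u-1-c_u)$, which I would prove by showing two matching inequalities.

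For the upper bound on $t_\alpha(S)$, I fix any $I \subseteq S$ with $I\notin \mathcal{X}_\alpha$, put $k=|I|$, and argue for arbitrary $u \in\{1,\ldots,s\}$. When $u \leq k$, Lemma \ref{CTeasy} yields $|I \cap \{j\colon hp_j\leq u\alpha\}| \leq u-1$, so the remaining $c_u - (u-1)$ elements of $\{j\in S\colon hp_j\leq u\alpha\}$ must lie in $S\setminus I$, giving $s - k \geq c_u - u + 1$, i.e.\ $k \leq s+u-1-c_u$. When $u > k$, the inequality $k \leq u-1 \leq s+u-1-c_u$ is trivial because $c_u \leq s$. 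Taking the minimum over $u$ gives $t_\alpha(S) \leq \min_u(s+u-1-c_u)$.

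For achievability, I let $k^\star = \min_u(s+u-1-c_u)$ with minimizer $u^\star$, sort the $p$-values in $S$ as $q_1 \leq \cdots \leq q_s$, and define $I^\star$ to consist of the indices of the $k^\star$ largest $p$-values of $S$, so that $p_{(i\mathbin{:}I^\star)} = q_{s-k^\star+i}$. Then I would verify $I^\star \notin \mathcal{X}_\alpha$ by contradiction: if there were $i\in\{1,\ldots,k^\star\}$ with $hq_{s-k^\star+i} \leq i\alpha$, then all of $q_1,\ldots,q_{s-k^\star+i}$ satisfy this bound, so $c_i \geq s-k^\star+i$. This rearranges to $s+i-1-c_i \leq k^\star-1 < k^\star$, contradicting the minimality defining $k^\star$. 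Combining the two directions gives $t_\alpha(S)=k^\star$ and hence $d_\alpha(S)=s-k^\star = \max_{1\le u\le s}(1-u+c_u)$. The edge case $k^\star=0$ is handled by $I^\star=\emptyset$, which is never in $\mathcal{X}_\alpha$ since the Simes test vacuously fails to reject $H_\emptyset$.

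The main obstacle I anticipate is the achievability step, because one must both identify the correct extremal $I^\star$ (it is not immediately obvious that a \emph{greedy} choice of the largest $p$-values is optimal, though Lemma \ref{CTeasy} implies it since replacing an element of $I$ with a larger $p$-value in $S\setminus I$ only weakens the condition $hp_{(i\mathbin{:}I)} > i\alpha$) and match the size exactly to the minimizer. The counting bound in the upper-bound direction is routine once Lemma \ref{CTeasy} is rewritten as the counting inequality $|I \cap \{j\colon hp_j\le u\alpha\}|\le u-1$.
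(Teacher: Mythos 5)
Your proposal is correct and follows essentially the same route as the paper: both reduce the claim via Lemma \ref{CTeasy} to a counting inequality and identify the extremal set as a split of $S$ by sorted $p$-values (your $I^\star$ of the $k^\star$ largest $p$-values is exactly the complement of the set $S\cap L_j$ of the $d$ smallest that the paper uses). The only difference is presentational: you work with $t_\alpha(S)$ and a minimum, the paper with $d_\alpha(S)$ and a maximum.
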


This shortcut is exact and applicable for all $S$. As an illustration, we apply it to the example of Figure \ref{consonant}, where $h=2$. For $S=\{1,3\}$ the maximal value is obtained for $u=2$, where we find $t_\alpha(S) = 1$.

Computation of $d_\alpha(S)$ using (\ref{eq_tS_short}) requires first computing $h$. It can be shown that $K_i \in \mathcal{U}$ implies $K_j\in \mathcal{U}$ if $j \geq i$, and this fact can be exploited to calculate $h$ for a fixed $\alpha$ using bisection in $O(m\log m)$ time. In fact, \cite{Meijer2017} even provide an algorithm that can be used to calculate $h$ for all $\alpha$ simultaneously in $O(m\log m)$ time. This allows computation of $d_\alpha(S)$ in $O(m\log m)$ time. If we need to calculate $d_\alpha(S)$ for many $S$, we note that $h$ does not depend on $S$, so this quantity needs to be calculated only once. Computation time for $S$ is linear in $|S|$ if $h$ has already been calculated, as follows. The $t_\alpha(S)$ depends on $p_i$ only through $c_i = \lceil hp_i/\alpha\rceil$, for all $i \in S$ (if $\alpha=0$ we take $c_i=0$ if $hp_i=0$ and $c_i=\infty$ otherwise). The $c_i$, $i\in S$, can be sorted in linear time in $|S|$ by discarding all $c_i > |S|$ and using a counting sort. Evaluating (\ref{eq_tS_short}) on these sorted $c_i$ takes linear time in $|S|$.

Further practical improvements are possible, although these do not change the formal computational complexity. If $|S|>0$, the maximum in (\ref{eq_tS_short}) is attained at most at the value of $u$ where the sum is equal to $|S|$, which happens when $u\alpha \geq h\max_{i\in S} p_i$. Moreover, we may restrict the values of $u$ that need to be considered further to $u \leq z_\alpha$, defined in the following lemma. Note that $z$ does not depend on $S$, and can be calculated in $O(m)$ time after sorting the $p$-values, so the use of Lemma \ref{lemma_z} does not increase the computational complexity.

\begin{lemma} \label{lemma_z}
The maximum of $1 - u + |\{i\in S\colon hp_i \leq u\alpha\}|$ over $1 \leq u \leq |S|$
is attained for some $1\leq u \leq z-m+h+1$, where
\[
z_\alpha = \left\{ \begin{array}{ll}
0, & \textrm{if $h=m$;} \\
\min\big\{ m-h \leq i \leq m \colon hp_{(i)} \leq (i-m+h+1)\alpha \big\}, & \textrm{otherwise.}
\end{array} \right.
\]
\end{lemma}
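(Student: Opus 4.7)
\textbf{Proof Plan for Lemma~\ref{lemma_z}.}
Write $f(u) = 1 - u + N(u)$ where $N(u) = |\{i\in S\colon hp_i \leq u\alpha\}|$, and set $u_0 = z - m + h + 1$. Since $z \geq m-h$ in the nontrivial case (and $z=0$ when $h=m$), we always have $u_0 \geq 1$. If $u_0 \geq |S|$ the conclusion is immediate, so the content of the lemma is that for every integer $u$ with $u_0 < u \leq |S|$ one has $f(u) \leq f(u_0)$. The plan is to prove this by comparing $N$ to the corresponding count over all of $\{1,\ldots,m\}$: let
\[
A(u) = |\{1 \leq i \leq m\colon hp_{(i)} \leq u\alpha\}|,
\]
so that $N(u) - N(u_0) \leq A(u) - A(u_0)$. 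It therefore suffices to establish the two bounds $A(u_0) \geq z$ and $A(u) \leq z + (u - u_0)$ for $u > u_0$; together these give $N(u) - N(u_0) \leq u - u_0$, hence $f(u) - f(u_0) \leq 0$.

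The first bound is immediate from the definition of $z$. When $h < m$, the minimising index satisfies $hp_{(z)} \leq (z - m + h + 1)\alpha = u_0\alpha$, so $A(u_0) \geq z$. When $h = m$ we have $z = 0$ and there is nothing to show at this step.

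The second bound is the heart of the argument and uses the definition of $h$ in~(\ref{def_h}), which says that $K_h \notin \mathcal{U}_\alpha$. Unpacking the negation of the Simes criterion~(\ref{eq_Simestest}) applied to the multiset $K_h = \{p_{(m-h+1)},\ldots,p_{(m)}\}$ yields
\[
hp_{(\ell)} > (\ell - m + h)\alpha \quad \text{for all } m-h+1 \leq \ell \leq m.
\]
Now take an integer $u > u_0$ and set $\ell = A(u)$. If $\ell \leq m - h$ then $\ell \leq z \leq z + (u-u_0)$ and we are done. Otherwise $\ell \in \{m-h+1,\ldots,m\}$, so the displayed inequality gives $(\ell - m + h)\alpha < hp_{(\ell)} \leq u\alpha$, whence $\ell - m + h < u$, and since everything is integer, $\ell \leq u + m - h - 1 = z + (u - u_0)$. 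This is exactly the desired bound.

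The main obstacle is the reduction to the global count $A(u)$: we need a bound that is uniform in $u$ and does not depend on $S$. This is made possible by the fact that the negation of $K_h \in \mathcal{U}_\alpha$ simultaneously controls \emph{all} the order statistics $p_{(m-h+1)},\ldots,p_{(m)}$, providing a linear upper envelope on $A(u)$ whose slope is exactly $1$, matching the rate at which the $-u$ term erodes $f(u)$. Once these two bounds are in place, the combination $f(u) - f(u_0) \leq -(u - u_0) + A(u) - A(u_0) \leq -(u-u_0) + (u-u_0) = 0$ closes the proof, covering the $h=m$ case as well since there $u_0 = 1$ and the argument collapses to the observation that $A(u) \leq u - 1$ for every $u$.
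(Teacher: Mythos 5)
Your argument is correct and follows essentially the same route as the paper's proof: your two bounds $A(u_0)\geq z$ and $A(u)\leq u+m-h-1$ (the latter extracted from $K_h\notin\mathcal{U}_\alpha$) are exactly the paper's observations that $hp_{(i)}\leq(z-m+h+1)\alpha$ for $i\leq z$ and $hp_{(i)}>u\alpha$ for $i\geq u+m-h$, which it packages as the single count $|\{i\in S\colon (z-m+h+1)\alpha<hp_i\leq u\alpha\}|\leq u+m-h-z-1$. The only substantive addition in the paper is a short verification that the set defining $z$ is nonempty when $h<m$ (using that $K_{h+1}\in\mathcal{U}_\alpha$ because $h$ is a maximum), a point you take for granted.
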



Clearly, the shortcut of Theorem \ref{shortcut} improves upon the earlier shortcut of \cite{Goeman2011} in computational speed, accuracy and applicability. In the following sections, we will use this shortcut to derive properties of the FDP bounds $q_\alpha(S)$.

\section{Relationship to Benjamini-Hochberg}

The Simes test and the FDR-controlling procedure of \cite{Benjamini1995}, abbreviated BH, share the same critical values. Close relationships between BH and closed testing based on Simes local tests can therefore be expected. A trivial but important relationship, already remarked by \cite{Goeman2011}, is that the event that the two methods make at least one non-trivial statement is identical, and that consequently the methods have the same weak FWER control. In this section we explore two further connections. We define the rejection set of BH at level $\alpha$ by $B_\alpha = L_{b_\alpha}$, with
\begin{equation} \label{def_b}
b_\alpha = \max\{1\leq i\leq m\colon mp_{(i)} \leq i\alpha\},
\end{equation}
taking $b_\alpha=0$ ($B_\alpha=\emptyset$) if the maximum does not exist.

The first connection we make between BH and closed testing with Simes local tests says that all non-trivial bounds $t_\alpha(S)$ relate to hypotheses rejected by BH. We will construct this as a consequence of Lemma \ref{lemma_z}. Let $Z_\alpha = L_z$, where $z$ is defined in Lemma \ref{lemma_z}. The following lemma is a corollary to Lemma \ref{lemma_z}.

\begin{lemma} \label{lemma_Z}
We have
\begin{enumerate}
\item $d_\alpha(S)=d_\alpha(S \cap Z)$ for all $S$;
\item $Z$ is minimal, i.e.\ if $d_\alpha(S)=d_\alpha(S \cap Y)$ for all $S$, then $Z \subseteq Y$;
\item $d_\alpha(Z)=m-h$;
\item $Z\subseteq B$.
\end{enumerate}
\end{lemma}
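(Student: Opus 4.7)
The plan is to derive all four parts from Lemma \ref{lemma_z} and the shortcut in Theorem \ref{shortcut}, using throughout a sharper reading of the definition of $z$. Writing $n(u)=|\{i:hp_i\le u\alpha\}|$, the condition $hp_{(i)}\le(i-m+h+1)\alpha$ is exactly $n(i-m+h+1)\ge i$, so $u^{*}:=z-m+h+1$ is the \emph{smallest} $u\ge 1$ with $1-u+n(u)\ge m-h$. Because $d_\alpha(\{1,\ldots,m\})=m-h$ from Lemma \ref{h} is an upper bound on this quantity, it is attained exactly at $u=u^{*}$ and is strictly less than $m-h$ for every $u<u^{*}$; in particular $n(u)\le z$ for all $u\le u^{*}$.

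For part 1, monotonicity of $d_\alpha$ in $S$, immediate from the shortcut, gives $d_\alpha(S\cap Z)\le d_\alpha(S)$. For the reverse, $n(u)\le z$ on $u\le u^{*}$ means $\{i:hp_i\le u\alpha\}\subseteq L_z=Z$ on this range. By Lemma \ref{lemma_z}, the maximum in (\ref{eq_tS_short}) is attained at some $u\le u^{*}$, and at such a $u$ the sets $\{i\in S:hp_i\le u\alpha\}$ and $\{i\in S\cap Z:hp_i\le u\alpha\}$ coincide, so $d_\alpha(S)\le d_\alpha(S\cap Z)$. Part 3 then drops out immediately by taking $S=\{1,\ldots,m\}$ in part 1 and invoking $d_\alpha(\{1,\ldots,m\})=m-h$ from Lemma \ref{h}.

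Part 2 is the step I expect to be the main obstacle, because the naive attempt via singletons $S=\{i\}$ fails: $d_\alpha(\{i\})$ can vanish even for $i\in Z$. Instead I would take $S=Z$ and show that $d_\alpha(Z\setminus\{i\})\le m-h-1$ for every $i\in Z$. The argument splits on $u$ in the shortcut applied to $Z\setminus\{i\}$: for $u<u^{*}$ the sharper reading already forces $1-u+n(u)\le m-h-1$, which bounds the expression; at $u=u^{*}$ the fact that $i\in Z$ implies $hp_i\le u^{*}\alpha$, so removing $i$ from the count drops the value from $m-h$ to $m-h-1$; and for $u>u^{*}$, every $j\in Z$ satisfies $hp_j\le u^{*}\alpha<u\alpha$, so $\{j\in Z:hp_j\le u\alpha\}=Z$ and the value equals $z-u<m-h-1$. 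Consequently, if some $Y$ satisfied the hypothesis of part 2 but missed an $i\in Z$, then plugging $S=Z$ into that hypothesis would give $m-h=d_\alpha(Z\cap Y)\le d_\alpha(Z\setminus\{i\})\le m-h-1$, a contradiction.

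For part 4, if $h=m$ then $Z=\emptyset\subseteq B$ trivially. When $h<m$, the maximality in the definition of $h$ forces $K_{h+1}\in\mathcal{U}_\alpha$, so the Simes test (\ref{eq_Simestest}) applied to $K_{h+1}$ produces some $k\in\{m-h,\ldots,m\}$ with $(h+1)p_{(k)}\le(k-m+h+1)\alpha$. Since $h\le h+1$, this $k$ also lies in the set over which $z$ is minimized, hence $z\le k$. The elementary inequality $m(k-m+h+1)\le k(h+1)$ rearranges to $k(m-h-1)\le m(m-h-1)$, which holds for every $k\le m$ whenever $h<m$, and strengthens the Simes bound to $mp_{(k)}\le k\alpha$. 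Hence $b\ge k\ge z$ and $Z=L_z\subseteq L_b=B$.
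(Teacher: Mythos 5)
Your proof is correct and follows essentially the same route as the paper's: parts 1 and 3 use Lemma \ref{lemma_z} exactly as the paper does, and part 4 reproduces the paper's key computation $mp_{(k)}\le k\alpha$ for an index $k$ witnessing $K_{h+1}\in\mathcal{U}_\alpha$, just with cleaner algebra. For part 2 you recast the paper's argument (which shows $d_\alpha(Y\cap Z)=m-h$ forces the count at $u^*$ to equal $z$, hence $|Y\cap Z|=z$) as a leave-one-out bound $d_\alpha(Z\setminus\{i\})\le m-h-1$; this rests on the same facts --- the value $m-h$ is attainable only at $u=u^*$ and requires all $z$ elements of $Z$ to be counted there --- so it is a difference of presentation rather than of substance.
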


Lemma \ref{lemma_Z} says that all interesting statements from the closed testing procedure are about hypotheses with indices in $Z$, i.e.\ with $p$-values at most $p_{(z)}$. We call $Z$ the \emph{concentration set} because all interesting results are `concentrated' in $Z$: there is no confidence for any discoveries relating to hypotheses with indices outside $Z$. For a user looking where the interesting findings of the experiment are, it makes sense to restrict to hypotheses with index in $Z$. If the user restricts to a smaller set, he or she may lose out on some discoveries. Since $Z \subseteq B$, all interesting findings of the closed testing procedure at $(1-\alpha)$-confidence are related to BH-rejected hypotheses. A helpful graphical construction of the relationship between $z$ and $b$ is given in Figure~\ref{fig_hzb}.

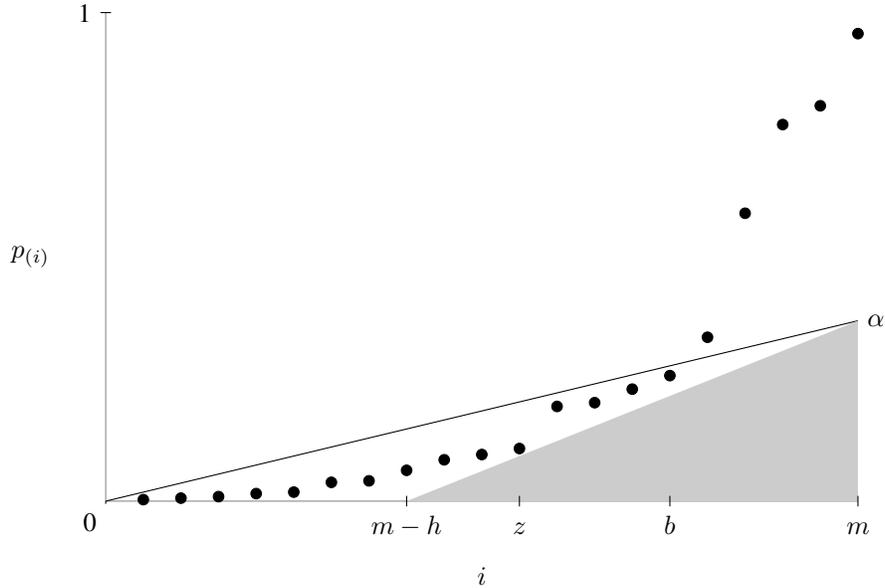
\begin{figure}[!ht]
\centering
\begin{tikzpicture}
    \draw[help lines] (0,0)--(10,0);
    \draw[help lines] (0,0)--(0,6.5) ;
    \draw (-2pt, 6.5) node[left] {1} -- (2pt, 6.5);
    \draw[fill] (0.5,0.02) circle(2pt);
    \draw[fill] (1,0.04) circle(2pt);
    \draw[fill] (1.5,0.06) circle(2pt);
    \draw[fill] (2,0.10) circle(2pt);
    \draw[fill] (2.5,0.12) circle(2pt);
    \draw[fill] (3,0.25) circle(2pt);
    \draw[fill] (3.5,0.27) circle(2pt);
    \draw[fill] (4,0.41) circle(2pt);
    \draw[fill] (4.5,0.55) circle(2pt);
    \draw[fill] (5,0.62) circle(2pt);
    \draw[fill] (5.5,0.70) circle(2pt);
    \draw[fill] (6,1.26) circle(2pt);
    \draw[fill] (6.5,1.31) circle(2pt);
    \draw[fill] (7,1.49) circle(2pt);
    \draw[fill] (7.5,1.67) circle(2pt);
    \draw[fill] (8,2.18) circle(2pt);
    \draw[fill] (8.5,3.83) circle(2pt);
    \draw[fill] (9,5.01) circle(2pt);
    \draw[fill] (9.5,5.26) circle(2pt);
    \draw[fill] (10,6.22) circle(2pt);
    \draw (0, 0) -- (10, 2.4) node[right] (alpha) {$\alpha$};
    \draw[draw=none, fill=gray!40] (4,0) -- (alpha.west) -- (10,0);
    \draw (4, -2pt) node[below] {$m-h$} -- (4, 2pt);
    \draw (5.5, -2pt) node[below] {\phantom{$h$}$z$\phantom{$h$}} -- (5.5, 2pt);
    \draw (7.5, -2pt) node[below] {\phantom{$h$}$b$\phantom{$h$}} -- (7.5, 2pt);
    \draw (10, -2pt) node[below] {\phantom{$h$}$m$\phantom{$h$}} -- (10, 2pt);
    \draw (0,-1pt) node[below left] {0} -- (0,0);
    \path (5,-1) node{$i$};
    \path (-1,3.25) node{$p_{(i)}$};
\end{tikzpicture}
\caption{Graphical illustration of $h$, $z$ and $b$. The points are the $p$-values plotted against their rank. The value of $h$ is the largest such that the gray triangle does not contain any of the points. The value of $z$ is the smallest index $i$ such that $(i+1, p_i)$ is inside the triangle. The value of $b$ is the index of the largest point below the line. The relationship $m-h \leq z \leq b$ is clear.} \label{fig_hzb}
\end{figure}

Combining Lemma \ref{lemma_Z} with (\ref{eq_SR}) it follows that
\[
|S \cap R| \leq d_\alpha(S) \leq |S \cap Z| \leq |S \cap B|.
\]
The confidence bound on the number of discoveries in $S$ is between the number of Hommel-FWER discoveries in $S$ and the number of BH-FDR discoveries in $S$. BH, therefore, bounds the number of discoveries of Simes-based closed testing from above.

On the other hand, from Lemma \ref{lemma_Z} we have $d_\alpha(B) = m-h$. Therefore, if $h<m$ (equivalently $b>0$), we have
\[
q_\alpha(B)\ =\ \frac{b-m+h}{b}\ <\ 1,
\]
which always gives a non-trivial confidence bound for the FDP realized by BH. This makes the trivial statement that $q_\alpha(\{1,\ldots,m\}) < 1$ if $b>0$, made at the start of this section, more precise.

As a second connection between BH and closed testing with Simes, we explore $q_\alpha(B_{q})$ for general $q$ and $\alpha$. This relationship we formulate in Lemma \ref{lemma_BH}. This lemma can be seen as a uniform improvement of the shortcut of \cite{Goeman2011} given in Section \ref{sec:shortcut}, but not as general and exact as that of Theorem \ref{shortcut}.

\begin{lemma} \label{lemma_BH} For any $q \in [0,1]$, we have $\alpha q_\alpha(B_{q}) \leq \hat\pi_\alpha q$, with equality if and only if $h_\alpha q=0$.
\end{lemma}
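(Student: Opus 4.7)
The strategy is to apply the exact shortcut of Theorem \ref{shortcut} to $S = B_q$ with a single carefully chosen index $u$ in the maximum. Write $b = b_q$ and $h = h_\alpha$, and note that by the defining property of $b$, every $i \in B_q = L_b$ satisfies $p_i \leq p_{(b)} \leq bq/m$. This suggests setting
\[
u_0 = \left\lceil \frac{hbq}{m\alpha} \right\rceil,
\]
which is the smallest positive integer with $u_0\alpha \geq hbq/m \geq hp_i$ for every $i \in B_q$, so $\{i\in B_q\colon hp_i \leq u_0\alpha\} = B_q$.

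Assume first that $b \geq 1$ and $hq > 0$; I want strict inequality here. If $u_0 \leq b$, then $u_0$ is admissible in the maximum of (\ref{eq_tS_short}), giving $d_\alpha(B_q) \geq 1 - u_0 + b$ and hence $t_\alpha(B_q) \leq u_0 - 1$. Using $\lceil x\rceil - 1 < x$ for all $x > 0$ with $x = hbq/(m\alpha)$,
\[
\alpha q_\alpha(B_q) \;=\; \frac{\alpha\, t_\alpha(B_q)}{b} \;\leq\; \frac{\alpha(u_0-1)}{b} \;<\; \frac{hq}{m} \;=\; \hat\pi_\alpha q.
\]
If instead $u_0 > b$, then $hbq/(m\alpha) > b$, which rearranges to $\hat\pi_\alpha q = hq/m > \alpha$, and the trivial bound $q_\alpha(B_q) \leq 1$ already gives $\alpha q_\alpha(B_q) \leq \alpha < \hat\pi_\alpha q$. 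Either branch produces strict inequality whenever $hq > 0$.

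The remaining corner cases match the equality clause. If $b = 0$, then $B_q = \emptyset$ and $q_\alpha(B_q) = 0$, so both sides vanish precisely when $hq = 0$. If $h = 0$, Lemma \ref{h} forces $t_\alpha(S) = 0$ for every $S$ and both sides are zero. If $q = 0$ and $b \geq 1$, the definition of $b$ forces $p_i = 0$ for every $i \in B_0$, so the count in (\ref{eq_tS_short}) is already $b$ at $u = 1$, yielding $t_\alpha(B_q) = 0$. Combined with the previous paragraph this gives equality exactly when $hq = 0$. The only mildly delicate point is ensuring the ``$u_0 > b$'' branch is handled separately rather than by the shortcut itself, since picking $u > b$ is not permitted in (\ref{eq_tS_short}); once one realizes the trivial bound $q_\alpha \leq 1$ suffices in that regime, the argument is complete.
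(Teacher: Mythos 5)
Your proof is correct and takes essentially the same route as the paper's: the paper applies Lemma \ref{CTeasy} directly to show that every $J\subseteq B_q$ with $|J|\geq b_q\hat\pi_\alpha q/\alpha$ lies in $\mathcal{X}$, which is exactly the content of evaluating the shortcut of Theorem \ref{shortcut} at your single index $u_0=\lceil h b_q q/(m\alpha)\rceil$, and both arguments yield the identical bound $t_\alpha(B_q)\leq u_0-1$ together with the same corner-case analysis for the equality claim. The only point you skip is $\alpha=0$ (where $u_0$ is undefined); there the left-hand side is $0$ and the inequality is immediate, so nothing substantive is missing.
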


Remember that $\hat\pi_\alpha$ is defined in (\ref{hatpi}). Since $\hat\pi_\alpha < 1$ if $q\leq\alpha$ and $b_q>0$, Lemma \ref{lemma_BH} strictly improves upon the bound $q/\alpha$ that was obtained (when $\alpha>0$) for the tail probability of $\pi(B_q)$ by \cite{Lehmann2005} using Markov's inequality.  Through the factor $\hat\pi_\alpha$ the bound of Lemma \ref{lemma_BH} incorporates an estimate of $|T|/m$, just like adaptive FDR-controlling methods do. Moreover, \cite{Lehmann2005} did not prove that the bound is simultaneous over $S$, and consequently over $q$, a statement that follows immediately from (\ref{eq_confidence}). Additionally, from Lemma \ref{lemma_BH} we see that
\[
q_\alpha(B_{q\alpha}) \leq \hat\pi_\alpha q.
\]
Apparently, by reducing the level of BH by a factor $\alpha$ we can guarantee that the probability that the FDP exceeds the nominal level $q$ is at most $\alpha$.

\cite{Goeman2011} propose to estimate $\pi(S)$ by $\hat q(S) = q_{1/2}(S)$. This estimate is simultaneously median unbiased in the one-sided sense that it underestimates $\pi(S)$ for at least one $S$ at most 50\% of the time. By the properties of BH, the expected FDP of $B_q$ is at most $q$. By Lemma \ref{lemma_BH} we see that the estimated FDP of this set is given by
\begin{equation} \label{eq:hatq}
\hat q(B_q) = q_{1/2}(B_q) \leq 2q\hat\pi_{1/2}.
\end{equation}
This is only an upper bound, and the actual value of $\hat q(B_\alpha)$ can be much smaller, even 0 if $B=R$. Note also that $\hat\pi_{1/2} = \hat q(\{1,\ldots, m\})$ is an estimate of the proportion of true null hypotheses in the testing problem at the lenient level $\alpha=1/2$. Equation (\ref{eq:hatq}) says that if we compare the FDP estimate $\hat q(B_q)$ with the value $q$ that we might hope to see from the properties of BH, this estimate is worse by at most a factor 2. This is a surprisingly small price to pay if we realize that $\hat q (S)$ is guaranteed to be simultaneously median unbiased not only for $S=B_q$, but also for $2^m-1$ other sets.

The lemmas in this section relate the power of Simes-based closed testing to BH. Lemma \ref{lemma_Z} tempers expectations by saying that the number of discoveries in a set $S$ is always less than the number of BH-discoveries within $S$. However, Lemma \ref{lemma_BH} implies that Simes-based closed testing is at least as powerful as a BH procedure applied on a reduced $\alpha$-level. This reduction of $\alpha$ does not depend on $m$, which may suggest similar scalability for large multiple testing problems of Simes-based closed testing and BH. We will make this more precise in the next section.

\section{Scalability of power} \label{sec_scalability}

The scalability of BH has been extensively studied by \cite{Genovese2002} and \cite{Chi2007} under the assumption of independent $p$-values. They showed that if a minimal amount of signal is present, which does not depend on $m$, the power of BH per hypothesis does not vanish as $m \to\infty$. In this section we show that an analogous property holds for Simes-based closed testing.

Following \cite{Chi2007} we use the model of \cite{Efron2012} in which $p$-values are sampled independently from a mixture distribution. For $0 \leq x \leq 1$, the distribution of the $p$-values is
\begin{equation} \label{efron}
P_n(x) = \gamma x + (1-\gamma) P_{1,n}(x),
\end{equation}
where $\gamma$ is the proportion of true hypotheses, and $P_{1,n}(x) \geq x$ is the (average) distribution of the $p$-values for false null hypotheses, which depends on the sample size $n$. In the rest of this section we assume fixed $n$ and suppress the subscript. Efron's model allows $m$ to grow while ensuring that later $p$-values are `similar' to previous ones.
The assumption of independent $p$-values is common in the literature on large-scale multiple testing problems. We can, however, easily relax independence to any form of weak correlation that still guarantees uniform convergence of the empirical distribution of the $p$-values as in the theorem of Glivenko-Cantelli. Such conditions are explored e.g. by \cite{Azriel2015, Delattre2016}.

We start by studying the behavior of $\hat\pi = h/m$ as $m\to\infty$.

\begin{lemma} \label{lemma_h_asymptotic}
For $m\to\infty$, we have $\hat\pi_\alpha \toP \bar\pi_\alpha$, where
\[
\bar\pi_\alpha = \inf_{0\leq x < 1}\frac{1-P(x\alpha)}{1-x} > 0,
\]
if $P(\alpha) < 1$ and $\bar\pi_\alpha = 0$ otherwise.
\end{lemma}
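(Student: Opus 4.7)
The plan is to rewrite $\hat\pi_\alpha = h_\alpha/m$ as a statistic of the empirical CDF $\hat F_m(t) = m^{-1}|\{k\colon p_k \leq t\}|$ and then transport uniform convergence of $\hat F_m$ to $P$ via the Glivenko--Cantelli theorem. Since $K_i$ consists of the $i$ largest $p$-values, the $j$-th order statistic inside $K_i$ is $p_{(m-i+j)}$, and the Simes local test fails to reject $K_i$ exactly when $p_{(m-i+j)} > j\alpha/i$ for all $j = 1,\ldots,i$. Using $p_{(k)} > t \Leftrightarrow \hat F_m(t) < k/m$, this is equivalent to $\hat F_m(\alpha) < 1$ together with $\hat F_m(x\alpha) < 1 - (i/m)(1-x)$ for every $x \in \{1/i, 2/i, \ldots, (i-1)/i\}$. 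Consequently, up to $O(1/m)$ discretisation, $\hat\pi_\alpha$ is the largest $\pi$ such that $\pi < (1 - \hat F_m(x\alpha))/(1-x)$ holds across this grid of $x$-values; this is the empirical analog of $\bar\pi_\alpha$.

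Let $\Delta_m = \sup_t |\hat F_m(t) - P(t)|$; Glivenko--Cantelli under the i.i.d.\ or weak-dependence assumptions cited in the paper gives $\Delta_m \toP 0$. For the upper bound $\mathrm{P}(\hat\pi_\alpha \geq \bar\pi_\alpha + \epsilon) \to 0$, restrict first to $P(\alpha) < 1$; the edge case $P(\alpha) = 1$ forces $p_{(m)} \leq \alpha$ almost surely, so $K_i \in \mathcal{U}_\alpha$ for every $i \geq 1$ and $\hat\pi_\alpha \equiv 0 = \bar\pi_\alpha$. Pick a fixed $x^* \in [0,1)$ with $(1 - P(x^*\alpha))/(1-x^*) \leq \bar\pi_\alpha + \epsilon/3$. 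If $\hat\pi_\alpha \geq \bar\pi_\alpha + \epsilon$, then $h \to \infty$, and for $m$ large the grid contains $\tilde x$ with $x^* \leq \tilde x \leq x^* + 1/h$. Applying the failure-to-reject condition at $\tilde x$, combined with $\hat F_m(\tilde x \alpha) \geq P(\tilde x \alpha) - \Delta_m \geq P(x^*\alpha) - \Delta_m$ (by monotonicity of $P$), reduces after rearrangement to $(2\epsilon/3)(1 - x^*) \leq \Delta_m + (\bar\pi_\alpha + \epsilon)/h$, which is impossible for large $m$ since the left side is fixed and positive while the right vanishes in probability.

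For the lower bound $\mathrm{P}(\hat\pi_\alpha \leq \bar\pi_\alpha - \epsilon) \to 0$ (nontrivial only when $\bar\pi_\alpha > \epsilon$), I would verify that $K_i \notin \mathcal{U}_\alpha$ for $i = \lceil (\bar\pi_\alpha - \epsilon)m\rceil$ with probability tending to $1$. Writing $k = i - j$ and $q = (i-k)\alpha/i$, the condition becomes $N_q := |\{\ell\colon p_\ell > q\}| > k$ for every $k \in \{0, 1, \ldots, i-1\}$. The main obstacle here is the small-$k$ regime, where the gap between $P(q)$ and the threshold $1-k/m$ is only $O(1/m)$, finer than the uniform Glivenko--Cantelli rate, so a pure $\Delta_m$-based bound cannot suffice. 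I would resolve this by splitting into two regimes: for $k \leq m(1-P(\alpha))/3$, monotonicity gives $N_q \geq N_\alpha$, and $|N_\alpha/m - (1-P(\alpha))| \leq \Delta_m \to 0$ yields $N_\alpha \geq 2m(1-P(\alpha))/3 \geq 2k > k$ eventually; for $k > m(1-P(\alpha))/3$, the definition of $\bar\pi_\alpha$ gives $P(q) \leq 1 - \bar\pi_\alpha k/i$, so $\hat F_m(q) < 1 - k/m$ reduces to $\Delta_m < (\bar\pi_\alpha - i/m)(k/i)$, which is uniformly bounded below by a fixed positive constant of order $\epsilon(1 - P(\alpha))/\bar\pi_\alpha$. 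Both bounds hold with probability tending to $1$, giving $\hat\pi_\alpha \geq \bar\pi_\alpha - \epsilon - 1/m$ w.h.p.\ and completing the convergence in probability.
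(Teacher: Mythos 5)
Your proof is correct, and it rests on the same core idea as the paper's: rewrite the non-rejection condition for $K_i$ in terms of the empirical distribution function, so that $h/m$ becomes (up to $O(1/m)$ discretisation) the empirical analogue of $\bar\pi_\alpha$, and then invoke Glivenko--Cantelli. The difference is in the organisation. The paper establishes a deterministic sandwich $\bigl(\phi(\hat P_m)-m^{-1}\bigr)\mathds{1}_{\{p_{(m)}>\alpha\}} \leq h/m \leq \phi(\hat P_m)\mathds{1}_{\{p_{(m)}>\alpha\}}$ with $\phi(f)=\inf_{0\leq x<1}(1-f(x\alpha))/(1-x)$, and then proves a separate continuity lemma for $\phi$ at any $f$ with $f(\alpha)<1$ so that the Continuous Mapping Theorem finishes the job. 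You instead bound the two tail probabilities $\mathrm{P}(\hat\pi_\alpha\geq\bar\pi_\alpha+\eps)$ and $\mathrm{P}(\hat\pi_\alpha\leq\bar\pi_\alpha-\eps)$ directly, choosing a near-minimiser $x^*$ for the upper bound and verifying $K_i\notin\mathcal{U}_\alpha$ at $i=\lceil(\bar\pi_\alpha-\eps)m\rceil$ for the lower bound. The technical crux is the same in both treatments: the functional degenerates as $x\to1$ because the denominator $1-x$ vanishes, and this is exactly where $P(\alpha)<1$ must be used. The paper's continuity lemma handles it by showing the infimum is effectively attained on $x<1-c/2$ with $c=1-f(\alpha)$; your small-$k$ versus large-$k$ split in the lower bound (using $N_q\geq N_\alpha$ and $1-P(\alpha)>0$ for $k\leq m(1-P(\alpha))/3$) is the direct-estimate version of the same device, and you are right that a pure $\Delta_m$ bound against the $\bar\pi_\alpha$-based majorant would fail there. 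The paper's factorisation is cleaner and reusable (the continuity lemma is stated separately); your version is more self-contained and makes the quantitative mechanism explicit. Both are valid.
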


The limit $\bar\pi$ is strictly smaller than 1 if and only if
\[
P(x\alpha) > x,
\]
for at least one $0 \leq x < 1$, or $P(\alpha)=1$. We call a distribution function $P$ for which $\bar\pi_\alpha < 1$ \emph{Simes-detectable at $\alpha$}. \cite{Chi2007} studied Simes-detectability under the name criticality in the context of BH. Simes-detectability holds for all $\alpha>0$ if $\gamma<1$ and test statistics are standard normal under the null hypothesis and normal with $\sigma^2=1$ and $\mu>0$ under the alternative. In other models, e.g.\ with $t$ and $F$-distributed test statistics, Simes-detectability holds from a certain sample size if $\gamma<1$. Sample sizes from which Simes-detectability is guaranteed may be calculated for these models from the results of \cite{Chi2007}. Simes-detectability never depends on $m$.

To illustrate that FWER-controlling methods generally do not scale well with $m$ we revisit Hommel's method and its rejected set $R$ defined in (\ref{def_R}), and given by
\[
R = \{1\leq i\leq m\colon hp_i \leq \alpha\}
\]
due to Lemma \ref{CTeasy}. For that set we have as $m\to\infty$, that unless we are in the unusual case that $P(\alpha) = 1$ or $P(0)>0$,
\begin{equation} \label{eq_hommel_large_m}
|R|/m \toP 0,
\end{equation}
since $hp_i \toP \infty$. This has implications for the power of Hommel's  FWER-controlling method. If $\gamma<1$, as $m$ increases, the number of false hypotheses grows proportional to $m$ in the model (\ref{efron}). The number of rejections of Hommel's method, however, vanishes relative to to $m$. Therefore, the average power, defined as the expected proportion of false null hypotheses that is rejected, vanishes. This kind of behavior is typical of FWER-controlling methods.

In contrast, BH does scale well with $m$ provided $P$ is Simes-detectable. \cite{Chi2007} showed that if $P$ is Simes-detectable at $\alpha$, we have that $|B|/m$ does not vanish. In contrast, if $P(x\alpha) < x$ for all $0\leq x<1$, we have $|B|/m \toP 0$. The borderline case is complex and we do not consider it here. Theorem \ref{thm_scalability} proves that the bounds $q_\alpha(S)$ have a similar scalability property. It is a consequence of Lemmas \ref{lemma_BH} and \ref{lemma_h_asymptotic}.

\begin{theorem} \label{thm_scalability}
Let $q \in [0,1]$ and suppose $P$ is Simes-detectable at $q\alpha$. Then there is a sequence of sets $(J_m)_{m\geq1}$ such that $J_m \subseteq \{1,\ldots,m\}$, and $q_\alpha(J_m) \leq \bar\pi_\alpha q$ for all $m$, and $\lim_{m\to\infty} \mathrm{P}(|J_m|/m \geq y) = 1$ for some $y>0$.
\end{theorem}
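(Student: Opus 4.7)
The plan is to take $J_m$ as a Benjamini--Hochberg rejection set at a data-dependent level, slightly shrunken below $q\alpha$, so that Lemma~\ref{lemma_BH} delivers the deterministic bound $\bar\pi_\alpha q$ in place of the random bound $\hat\pi_\alpha q$. Concretely, define the shrinkage factor
\[
c_m = \min\!\big(1,\; \bar\pi_\alpha/\hat\pi_\alpha\big)
\]
(with the convention $c_m = 1$ when $\hat\pi_\alpha = 0$), and let $J_m = B_{c_m q\alpha}$. Applying Lemma~\ref{lemma_BH} at level $c_m q\alpha$ gives
\[
q_\alpha(J_m) \;\leq\; \hat\pi_\alpha\, c_m\, q \;\leq\; \bar\pi_\alpha\, q,
\]
the second inequality by construction of $c_m$. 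This handles the bound claim for all $m$.

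For the size requirement I would first invoke Lemma~\ref{lemma_h_asymptotic} to obtain $\hat\pi_\alpha \toP \bar\pi_\alpha$. In the non-degenerate case $\bar\pi_\alpha > 0$, the continuous mapping theorem yields $c_m \toP 1$, so for any $\epsilon>0$, $\mathrm{P}(c_m \geq 1-\epsilon) \to 1$, and on that event monotonicity of BH in its threshold gives $|J_m| \geq |B_{(1-\epsilon)q\alpha}|$. To apply the \cite{Chi2007} result quoted before the theorem, I need Simes-detectability at the shrunken level $(1-\epsilon)q\alpha$: Simes-detectability at $q\alpha$ supplies some $x_0 \in [0,1)$ with $P(x_0 q\alpha) > x_0$, and continuity of $P$ (inherited from continuity of $P_1$ in Efron's mixture) gives $P(x_0 (1-\epsilon) q\alpha) > x_0$ for $\epsilon$ sufficiently small, transferring detectability to the shrunken level. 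Combining the two high-probability events then produces $\mathrm{P}(|J_m|/m \geq y) \to 1$ for any $y$ strictly below the positive limit of $|B_{(1-\epsilon)q\alpha}|/m$. The degenerate case $\bar\pi_\alpha = 0$ is trivial: then essentially all intersections are rejected by Simes, $\hat\pi_\alpha$ is zero on a high-probability event, and $q_\alpha$ vanishes identically by Lemma~\ref{h}, so any choice of $J_m$ with non-vanishing size works.

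The main obstacle is engineering the bound $q_\alpha(J_m) \leq \bar\pi_\alpha q$ to hold for all $m$ (pointwise in $\omega$), rather than just in the limit: Lemma~\ref{lemma_BH} naturally delivers only the random bound $\hat\pi_\alpha q$, and the fluctuations of $\hat\pi_\alpha$ around $\bar\pi_\alpha$ must be absorbed into the definition of $J_m$ itself, which the shrinkage factor $c_m$ accomplishes at the cost of a vanishing loss in the effective threshold. A secondary subtlety is that Simes-detectability is not monotone in the threshold---a smaller level makes detectability harder---so detectability at $(1-\epsilon)q\alpha$ does not follow from detectability at $q\alpha$ without the explicit continuity argument on $P$.
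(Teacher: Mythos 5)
Your construction is essentially the paper's: the paper takes $J_m = B_{\tilde\alpha_m}$ with $\tilde\alpha_m = q\alpha\bar\pi_\alpha/\hat\pi_{\alpha,m}$ (and $\tilde\alpha_m = 1$ when $\hat\pi_{\alpha,m}=0$), so that Lemma \ref{lemma_BH} converts the random bound $\hat\pi_{\alpha,m}\tilde\alpha_m/\alpha$ into the deterministic $\bar\pi_\alpha q$, and then uses Lemma \ref{lemma_h_asymptotic} together with Glivenko--Cantelli to show that $|J_m|/m$ stays bounded away from zero with probability tending to one. Your $c_m = \min(1,\bar\pi_\alpha/\hat\pi_\alpha)$ is the same device with the level capped at $q\alpha$, and the bound $q_\alpha(J_m)\le \hat\pi_\alpha c_m q \le \bar\pi_\alpha q$ goes through exactly as you say.

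The one step that does not survive under the paper's assumptions is your transfer of Simes-detectability from $q\alpha$ to $(1-\eps)q\alpha$ via ``continuity of $P$'': the paper nowhere assumes $P_1$ (hence $P$) is continuous, only that it is a distribution function with $P_{1,n}(x)\ge x$. If $P$ has an atom at $x_0 q\alpha$ with $P((x_0q\alpha)^-)\le x_0 < P(x_0 q\alpha)$, then $P(x_0(1-\eps)q\alpha)\le x_0$ for every $\eps>0$ and the inequality you assert fails for that $x_0$. The conclusion you want is nevertheless true without continuity: take $x' = x_0/(1-\eps)$, so that $P(x'(1-\eps)q\alpha) = P(x_0 q\alpha) > x_0$, and for $\eps$ small enough $x' < \min\big(1, P(x_0 q\alpha)\big)$, which gives $P(x'(1-\eps)q\alpha) > x'$ and hence detectability at the shrunken level. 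The paper sidesteps this entirely by not capping the level --- on its event $x\hat\pi_{\alpha,m}\le y\bar\pi_\alpha$ one has $y\tilde\alpha_m \ge xq\alpha$, so $\hat P_m(y\tilde\alpha_m)\ge \hat P_m(xq\alpha)$ is compared directly to $P(xq\alpha)>x$ --- and by deriving the non-vanishing size of the BH set from Glivenko--Cantelli rather than quoting \cite{Chi2007} as a black box. With the rescaling fix your argument is correct; as written, the continuity step is the only genuine gap.
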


Theorem \ref{thm_scalability} shows that FDP confidence statements from Simes-based closed testing scale well with $m$. If Simes-detectability holds and $m$ is large enough, then for any desired FDP level $q$ we can always select a set for which we have $(1-\alpha)$-confidence of an FDP at most $q$. The size of this set does not vanish but grows linearly with $m$. Clearly, this is the analogue of Chi's result for BH: like that method, Simes-based closed testing has non-vanishing average power as $m \to\infty$ when demanding $1-\alpha$ confidence of an FDP at most $q$. The requirement for control of FDP with $1-\alpha$ confidence at level $q$ (rather than $\bar\pi_\alpha q$) is Simes-detectability at level $q\alpha/\bar\pi_\alpha$ (or $\bar\pi_\alpha=0$). This can be more strict than the requirement for FDR control of BH, which is Simes-detectability at level $q$, since usually $\alpha/\bar\pi_\alpha < 1$. In practice, this means for most models that a larger sample size is needed.

To make the link with average power explicit, choose values of $\alpha$ and $q$, and let $J_m$ be the set indicated by Theorem \ref{thm_scalability}. Consider a multiple testing procedure that rejects the $|J_m|$ hypotheses with smallest $p$-values. By Theorem \ref{thm_scalability}, such a procedure guarantees that FDP $<q$ with probability $\geq 1-\alpha$. The average power of this procedure, i.e.\ the proportion of false hypotheses that is rejected, is
\[
\frac{|J_m \cap T^c|}{|T^c|}\ =\
\mathrm{P}(i \in J_m \mid i \notin T)\ \geq\  \mathrm{P}(i \in J_m)\ =\ \frac{|J_m|}{m},
\]
where $T^c = \{1,\ldots,m\}\setminus T$. As $m\to\infty$, average power is therefore at least $c>0$ with probability 1.

Comparing Simes-based closed testing with Hommel's procedure, we see that Theorem \ref{thm_scalability} shows that there can be a huge qualitative difference in large $m$ behavior between $t_\alpha(S)$ and $|S \setminus R|$, whose relationship we discussed for small $m$ in Section \ref{sec_Non-consonant}. Theorem \ref{thm_scalability} shows that these two quantities may have divergent behavior if $m \to\infty$. By (\ref{eq_hommel_large_m}), for any $S_m$ with $|S_m|/m \to x \in (0,1]$, we have $|S_m\setminus R|/|S_m| \toP 1$, because $|R|/m \toP 0$. The result of Theorem \ref{thm_scalability} clearly does not hold if we substitute $|J_m\setminus R|$ for $t_\alpha(J_m)$. This qualitative difference in behavior as $m\to\infty$ between power of Hommel's FWER-controlling procedure and the FDP confidence bounds $t_\alpha(S)$ is exclusively due to the exploitation of non-consonant rejections by the latter. From this result, we see that non-consonant rejections, which are rare in closed testing with Simes if $m$ is small, become more and more frequent as $m$ increases, and that as $m\to\infty$ the number of consonant rejections even vanishes relatively to the number of non-consonant rejections. The step from FWER to FDP control therefore forces a radical reassessment of the value of consonance in closed testing procedures. Non-consonant procedures, which are known to be suboptimal from a FWER perspective \citep{Bittman2009, Gou2014}, seem to be crucial for powerful FDP inference in large problems.

\section{Consistency in large-scale multiple hypothesis testing}

The result in Theorem \ref{thm_scalability} is a finite sample result, which applies for $m\to\infty$ whenever the sample size is large enough for Simes-detectability to hold. However, it only indicates the existence of at least one sequence of sets $J_m$ with the desired property. FDP confidence bounds $q_\alpha(S)$, however, are available for all sets $S$. In this section we investigate consistency of $q_\alpha(S)$ as an estimator of $\pi(S)$. Consistency here is understood in the classical sense where the sample size $n\to\infty$. Since we are interested in large multiple testing problems, however, we will also let $m\to\infty$ with $n$ at an arbitrarily fast rate. In this section we emphasize dependence of quantities on $m$ and $n$

We remain within the context of Efron's model (\ref{efron}), but consider the infinite set $S \subseteq \mathbb{N}$ as the subset of interest. In Efron's model it is natural to think of $S$ as random, like $T$. We assume that conditional on $i\in S$, $p_{i,n}$ is independently drawn from a mixture model with distribution
\begin{equation} \label{efron_S}
P_n^{S}(x) = \gamma_{S} x + (1-\gamma_{S}) P_{1,n}^{S}(x).
\end{equation}
All distributions depend on $n$ (but not on $m$). The unconditional distribution of the $p$-values still follows Efron's model (\ref{efron}). We emphasize that the proportion of true null hypotheses, as well as the distribution of $p$-values under the alternative may depend on $S$. The latter may happen e.g.\ because effect sizes in $S$ may be larger or smaller than overall.

We make two further assumptions. First, the test is consistent as $n\to\infty$. Formally, for $0 < x \leq 1$,
\begin{equation} \label{ass:alternative}
\lim_{n\to\infty} P_{1,n}^{S} (x) = 1,
\end{equation}
and $P_{1,n}(x)$ is weakly increasing in $n$ for every $0 \leq x \leq 1$. We make the same assumption for $P_{1,n}(x)$. Note that we do not make the assumption that $P_{1,n}(0) \toP 1$. In most applications we expect $P_n(0)=0$ for all $n$: $p$-values are almost surely strictly positive. Second, we assume that if $S_m = S \cap \{1, \ldots, m\}$,
\begin{equation} \label{ass:largeS}
\plim_{m\to\infty} |S_m|/m = c > 0.
\end{equation}
The size of $S_m$ does not vanish relative to the size of the multiple testing problem.

Under these assumptions we have the following theorem.

\begin{theorem} \label{thm:consistency}
Under assumptions (\ref{ass:alternative}) and (\ref{ass:largeS}) we have, for $0<\alpha <1$,
\[
\plim_{(m,n)\to\infty} q_{\alpha, m, n}(S_m) = \gamma_S.
\]
\end{theorem}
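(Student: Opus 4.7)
My approach is to apply the exact shortcut of Theorem \ref{shortcut} to rewrite
\[
\frac{d_{\alpha}(S_m)}{|S_m|} = \max_{1\leq u \leq |S_m|}\!\left\{\tfrac{1-u}{|S_m|} + \hat F_{S_m,n}\!\bigl(u\alpha/h_{m,n}\bigr)\right\},
\]
where $\hat F_{S_m,n}$ is the empirical CDF of the $p$-values indexed by $S_m$, and to show that this maximum converges in probability to $1-\gamma_S$; since $q_{\alpha,m,n}(S_m) = 1 - d_\alpha(S_m)/|S_m|$, the theorem follows. Two asymptotic facts drive the argument. First, Lemma \ref{lemma_h_asymptotic} yields $h_{m,n}/m \toP \bar\pi_\alpha(n)$ as $m\to\infty$, and under (\ref{ass:alternative}) a sandwich computation gives $\bar\pi_\alpha(n)\to\gamma$ as $n\to\infty$: for any fixed $x\in(0,1)$, $(1-P_n(x\alpha))/(1-x)\to\gamma(1-x\alpha)/(1-x)\downarrow\gamma$ as $x\downarrow 0$, while the inequality $P_n(x\alpha)\leq\gamma x\alpha+(1-\gamma)$ shows the ratio is always $\geq\gamma$. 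Second, the Dvoretzky--Kiefer--Wolfowitz inequality gives $\sup_y|\hat F_{S_m,n}(y)-P_n^S(y)| = O_P(|S_m|^{-1/2})$ at a rate independent of $n$, making both convergences jointly valid as $(m,n)\to\infty$. Writing $\beta_{m,n} = h_{m,n}/|S_m|$, assumption (\ref{ass:largeS}) then gives $\beta_{m,n}\toP\gamma/c$.

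For the lower bound $q_\alpha(S_m)\geq\gamma_S-o_P(1)$, equivalently an upper bound on the maximum, I use the elementary estimate $P_n^S(y) \leq \gamma_S y + (1-\gamma_S) = 1-\gamma_S(1-y)$ and uniform empirical convergence to bound each term of the max by $1-\gamma_S+y(\gamma_S-\beta_{m,n}/\alpha)+o_P(1)$. The crucial structural observation is that matching the uniform components of the mixtures (\ref{efron}) and (\ref{efron_S}) yields the identity $\gamma=c\gamma_S+(1-c)\gamma_{S^c}$, which forces $\gamma_S\leq\gamma/c$. Combined with $\alpha<1$ this gives $\beta_{m,n}/\alpha\toP\gamma/(c\alpha)\geq\gamma/c\geq\gamma_S$, so the coefficient of $y$ is asymptotically nonpositive, the supremum is attained as $y\downarrow 0$, and the overall maximum is at most $1-\gamma_S+o_P(1)$.

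For the upper bound $q_\alpha(S_m)\leq\gamma_S+o_P(1)$, equivalently a lower bound on the maximum, I exhibit a single term of the max that already approaches $1-\gamma_S$. Since $P_{1,n}^S(y)\to 1$ for every $y>0$ and is monotone in $n$, a diagonal construction produces a deterministic sequence $y_n\downarrow 0$ with $P_{1,n}^S(y_n)\to 1$, and hence $P_n^S(y_n)\to 1-\gamma_S$. Setting $u_n=\lceil y_n h_{m,n}/\alpha\rceil$, uniform convergence of $\hat F_{S_m,n}$ at $y_n$ gives
\[
\tfrac{1-u_n}{|S_m|} + \hat F_{S_m,n}\!\bigl(u_n\alpha/h_{m,n}\bigr) = -\tfrac{y_n\beta_{m,n}}{\alpha} + P_n^S(y_n) + o_P(1) \longrightarrow 1-\gamma_S.
\]
Combining the two bounds gives $d_\alpha(S_m)/|S_m| \toP 1-\gamma_S$ and hence the theorem.

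The main obstacle I anticipate is making the joint limit $(m,n)\to\infty$ fully rigorous, in particular the empirical-process approximation $\hat F_{S_m,n}(y_n)-P_n^S(y_n)=o_P(1)$ at the moving point $y_n$ and the joint convergence $h_{m,n}/m\to\gamma$; both require uniformity in $n$, which the dimension-free DKW bound and the monotonicity of $P_{1,n}^S$ in $n$ together supply. A subtle but essential structural point is the inequality $\gamma_S\leq\gamma/c$: without it the coefficient of $y$ in the lower-bound step could be positive, the maximum would be pulled to the upper end of its range, and the asymptotic value of $q_\alpha$ would be strictly smaller than $\gamma_S$, contradicting the coverage inequality $q_\alpha(S_m)\geq\pi(S_m)$ that already holds with probability at least $1-\alpha$.
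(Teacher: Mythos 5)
Your overall strategy coincides with the paper's up to the point where the double limit must be handled: both arguments rewrite $d_\alpha(S_m)/|S_m|$ via the shortcut of Theorem \ref{shortcut} as a maximum of an empirical distribution function minus a linear term in $u$, both use the sandwich $\gamma\le\bar\pi_{\alpha,n}\le(1-P_n(y\alpha))/(1-y)\to(\gamma-\gamma y\alpha)/(1-y)$ to get $\bar\pi_{\alpha,n}\to\gamma$, both rest on the inequality $\gamma_S\le\gamma/c$ (your mixture-decomposition derivation is the same observation as the paper's $c\gamma_S=\plim|T\cap S_m|/m\le\gamma$), and your moving sequence $y_n\downarrow0$ plays the role of the paper's ``choose $0<y<1$ arbitrary'' step. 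Where you genuinely diverge is the treatment of $(m,n)\to\infty$: the paper never attempts a direct joint limit; it proves both iterated limits and invokes Lemma \ref{hildebrandt} (Hildebrandt), exploiting the monotonicity of $q_{\alpha,m,n}(S_m)$ in $n$, to conclude the double limit. You instead argue for joint convergence directly by claiming all error terms are controlled uniformly in $n$. If that uniformity holds, your route is arguably cleaner, since it dispenses with the interchange lemma.

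The gap is precisely in that uniformity claim for $h_{m,n}/m$. DKW does give $\sup_x|\hat P_{m,n}(x)-P_n(x)|=O_P(m^{-1/2})$ at a rate independent of $n$, but what you need is $|h_{m,n}/m-\bar\pi_{\alpha,n}|=o_P(1)$ uniformly in $n$, and passing from the first to the second requires the functional $\phi(f)=\inf_{0\le x<1}(1-f(x\alpha))/(1-x)$ to be continuous at the $P_n$ with a modulus uniform in $n$. By the paper's Lemma \ref{lemma_cont} this modulus is proportional to $1-P_n(\alpha)$, so you need $\inf_n\big(1-P_n(\alpha)\big)>0$. This holds when $\gamma>0$, since $1-P_n(\alpha)\ge\gamma(1-\alpha)$, but fails when $\gamma=0$, because then $P_n(\alpha)=P_{1,n}(\alpha)\to1$ under assumption (\ref{ass:alternative}). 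The theorem survives in that boundary case --- $\gamma=0$ forces $\gamma_S=0$, so your lower bound $q_{\alpha,m,n}(S_m)\ge\gamma_S-o_P(1)$ is trivial, and your upper-bound term only needs $y_n\beta_{m,n}=o_P(1)$, which already follows from $\beta_{m,n}\le m/|S_m|=O_P(1)$ --- but you must make this case split explicit, since as written ``DKW supplies the uniformity'' is not true for the $h_{m,n}/m$ convergence in general. Two smaller points to patch: say how you handle $h_{m,n}=0$ (where $u\alpha/h_{m,n}$ is undefined but $d_\alpha(S_m)=|S_m|$ and $q=0\le\gamma_S$ trivially), and verify that $u_n=\lceil y_nh_{m,n}/\alpha\rceil\le|S_m|$ with probability tending to one, so that the term you exhibit actually lies in the range of the maximum.
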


Since $\gamma_S = \plim_{m\to\infty} \pi(S_m)$ in model (\ref{efron_S}) is the proportion of true null hypotheses in $S$, Theorem \ref{thm:consistency} says that $q_{\alpha,m,n}(S_m)$ is a consistent estimator of $\pi(S_m)$. This implies consistency of the confidence interval $[0, q_{\alpha,m,n}(S_m)]$ for $\pi(S_m)$, or at least the best we can expect of such consistency for a one-sided interval.

It is interesting that the theorem not only allows $m$ to go to infinity, but actually requires it. For fixed $m$ it can be shown that the probability that $q_{\alpha,m,n}(S_m)$ exceeds $\pi(S_m)$ by any positive margin vanishes as $n \to\infty$, but the same does not hold for the probability that $q_{\alpha,m,n}(S_m)$ is smaller than $\pi(S_m)$ by such a margin. Although both $m$ and $n$ need to go to infinity for consistency to hold, the relative rates at which this happens are arbitrary, as emphasized by the double limit.

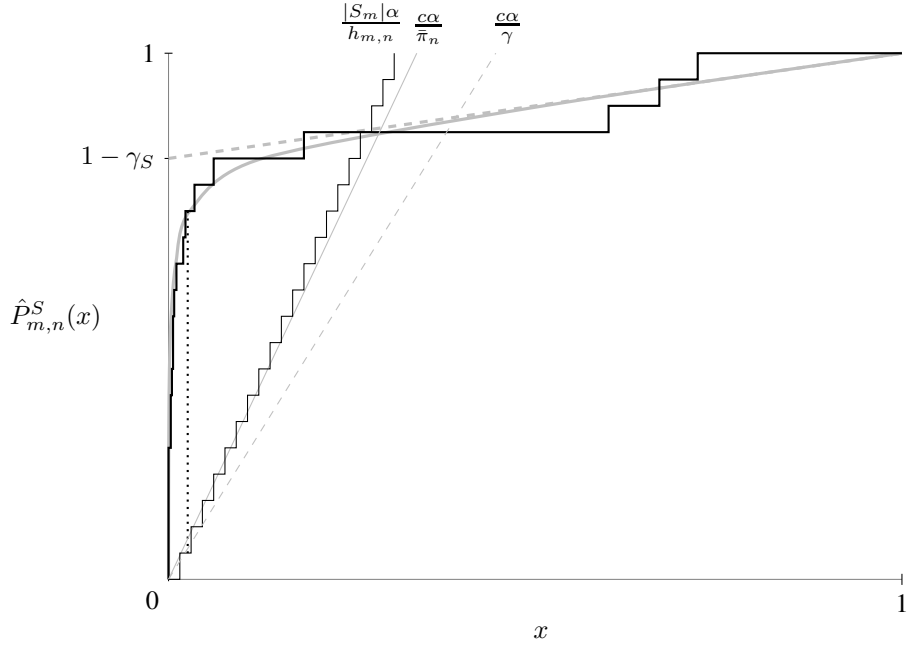
\begin{figure}[!ht]
\centering
\begin{tikzpicture}[yscale=0.7, xscale=1.5]
    \draw[help lines] (0,0)--(0,10);
    \draw[help lines] (0,0)--(6.5,0) ;
    \draw (-1pt, 10) node[left] {1} -- (1pt, 10);
    \draw (-1pt, 8) node[left] {$1-\gamma_S$} -- (1pt, 8);
    \draw [lightgray, very thick] plot [smooth] coordinates {
    (0,0) (0, 1) (.001,2) (.003,3) (.008,4) (.023,5)
    (.064, 6) (.188, 7) (.832, 8) (3.324, 9) (6.5, 10)};
    \draw [lightgray, dashed, very thick] (0,8) -- (6.5,10);
    \draw [thick] plot [const plot] coordinates {
    (0,0) (0, 2.5) (.02,3.5) (.03,4) (.04,5) (.05, 5.5) (.07, 6)
    (.13, 6.5) (.15, 7) (.23, 7.5) (.40, 8) (1.2, 8.5) (3.90, 9) (4.35, 9.5) (4.69, 10) (6.5, 10)};
    \draw [dotted, thick] (0.17,7) -- (0.17,.5);
    \draw [lightgray] (0, 0) -- (2.2, 10) node[above, black] (alpha) {$\phantom{m}\frac{c\alpha}{\bar\pi_n}$};
    \draw plot [const plot] coordinates {
    (0,0) (.1,.5) (.2,1) (.3, 1.5) (.4,2) (.5,2.5) (.6,3) (.7,3.5) (.8,4) (.9,4.5) (1,5)
    (1.1,5.5) (1.2,6) (1.3, 6.5) (1.4,7) (1.5,7.5) (1.6,8) (1.7,8.5) (1.8,9) (1.9,9.5)
    (2,10)} node[above] (alpha) {$\frac{|S_m|\alpha}{h_{m,n}}\phantom{mm}$};
    \draw [lightgray, dashed] (0, 0) -- (2.9, 10) node[above, black] (alpha) {$\phantom{m}\frac{c\alpha}{\gamma}$};
    \draw (6.5, -3pt) node[below] {1} -- (6.5, 3pt);
    \draw (0,-1pt) node[below left] {0} -- (0,0);
    \path (3.3,-1) node{$x$};
    \path (-1,5) node{$\hat P^S_{m,n} (x)$};
\end{tikzpicture}
\caption{Proof sketch for Theorem \ref{thm:consistency}. The thick step function is the empirical distribution function \mbox{$\hat P^S_{m,n}(x) = |\{i \in S_m\colon p_{i,n} \leq x\}|/|S_m|$} of the $p$-values in $S_m$. The value of $1- q_{\alpha,m,n}(S_m)$ is the maximal distance (dotted) between $\hat P^S_{m,n}(x)$ and the regular step function (left-continuous) ending at $|S_m|\alpha/h_{m,n}$. As $m\to\infty$, $\hat P^S_{m,n}(x)$ converges to the gray curve $P_n(x)$, and the regular step function converges to the gray line ending at $c\alpha/\bar\pi_n$. As $n\to\infty$, the gray curve and the gray line converge to the dashed gray lines. It is clear that $q_{\alpha,m,n}(S_m) \toP \gamma_S$ as both $m, n \to \infty$, and it is intuitive from this construction that it does not matter at what relative rate this happens.} \label{fig_cons}
\end{figure}

A sketch of the proof of Theorem \ref{thm:consistency} is given in Figure \ref{fig_cons}. From this, it is clear that the assumption in (\ref{ass:largeS}) that $c>0$ is crucial. If $c=0$, the endpoint $\alpha|S_m|/h_{m,n}$ of the reference line goes to 0 as $m\to\infty$. The rate at which this happens then competes with the rate at which $\hat P^S_{n}(x)$ goes to $1-\gamma_S$ for $x$ near 0. In that case the relative rates at which $m$ and $n$ go to infinity matter, and the double limit result fails. In particular, the assumption that $c>0$ excludes singleton sets $S$, so that Theorem \ref{thm:consistency} cannot be applied to classical FWER control. Similarly, the theorem fails if we replace $q_{\alpha,m,n}(S_m)$ by $|S_m \setminus R_{\alpha,m,n}|/|S_m|$. In practice, we expect the speed of convergence of $q_{\alpha,m,n}(S_m)$ to $\gamma_S$ to depend crucially on $c$, with large sets overestimating $\pi(S_m)$ much less than small ones if $n$ is small.

It is important that Theorem \ref{thm:consistency} holds for every fixed $0<\alpha\leq 1$. If consistent estimation, rather than confidence, is the primary concern, larger values of $\alpha$, such as $\alpha=1/2$, can be expected to yield faster convergence. Note the contrast with comparable results, e.g.\ for BH, where for fixed $\alpha$ it can easily be shown under the model (\ref{efron}) \citep{Chi2007} that, if $\gamma<1$,
\[
\plim_{(m,n)\to\infty} \frac{|B_{\alpha,m,n}|}{m} = \frac{1-\gamma}{1-\alpha\gamma},
\]
implying that, if $\gamma > 0$, $B_{\alpha,m,n}$ is always asymptotically `too large' for fixed $\alpha$. For $B_{\alpha,m,n}$ to consistently approach the set of false null hypotheses, we need $\alpha$ to approach 0, but at a slow enough rate in $n$ not to upset the asymptotic behavior of $|B_{\alpha,m,n}|$. Such subtlety is not necessary with Theorem \ref{thm:consistency}, which allows fixed $\alpha$.

\section{Discussion}

Closed testing can be used for FWER control of elementary hypotheses, but also for creating simultaneous confidence intervals for the proportion of true null hypotheses in subsets. This use of closed testing implies FWER control over exponentially many hypotheses of the form
\[
\pi(S) \geq q
\]
for all $S \subseteq \{1,\ldots, m\}$ and all $q \in [0,1]$. One might expect that such a huge multiple testing burden would be crippling in large problems, and that power of such a procedure would vanish for large $m$, as indeed it does when using closed testing for FWER control on elementary hypotheses.

However, in this paper we have shown that this is not generally the case. For closed testing with Simes local tests the power for inferring on other hypotheses than elementary ones does not vanish as $m$ increases. We have presented two concrete results under the assumption of independent $p$-values. First, we have proven that at any desired FDP level and any desired confidence level, as $m \to \infty$, it is possible to find a set that has confidence of an FDP below the desired level. The size of this set grows to infinity with $m$. This result holds whenever the sample size $n$ is large enough to have Simes-detectability. Second, we show that for every set whose size does not vanish relative to $m$, the confidence bound for FDP converges to the true FDP whenever both $m$ and $n$ go to infinity, at arbitrary relative rates. If the size of the set does not vanish, the power for statements on FDP of such sets does not vanish.

For fixed $m$ and $n$ we have established relationships between the procedure of Benjamini and Hochberg (BH) and closed testing with Simes local tests that go beyond the trivial observation that the methods have the same weak FWER control. The performance of FDP confidence bounds can be bounded from above by the BH-rejected set, and from below by the BH-rejected set at an $\alpha$-level reduced by a fixed factor. The loss of power of FDP confidence statements relative to FDR-control is surprisingly small considering that many more and much stronger statements are made by closed testing.

Our results force a reassessment of the value of consonance in closed testing. Crucial for the results of this paper are the non-consonant rejections that the Simes test makes. Non-consonant rejections are rare for small $m$, but quickly start to dominate as $m$ increases. Consonant methods will not generally have the scalability in $m$ that we have shown here. For example, it is easy to see that the asymptotic results of this paper fail for closed testing with Bonferroni local tests.

Closed testing can be computationally expensive. There is no value in having good power for large $m$ if computation time is prohibitive. We have presented a fast and exact shortcut that allows calculation of the confidence bound $q_\alpha(S)$ in linear time in $|S|$ after an initial preparatory step (independent of $S$) that takes $O(m\log m)$ time. This shortcut improves upon earlier ones in computation time, applicability and exactness. It is implemented in the package \texttt{hommel} in \texttt{R} \citep{Goeman2017}.

\appendix

\section{Proofs}

\begin{replem}{\ref{CTeasy}}
$I \in \mathcal{X}$ if and only if there exists some $1 \leq i \leq |I|$ such that
$h p_{(i\mathbin{:}I)} \leq i\alpha$.
\end{replem}

\begin{proof}
Consider first the case $|I| > h$. By Lemma \ref{h}, $I \in \mathcal{X} \subseteq \mathcal{U}$, so there is an $1 \leq i \leq |I|$ such that
$h p_{(i\mathbin{:}I)} \leq |I| p_{(i\mathbin{:}I)} \leq i \alpha$.

Next, consider the case $|I| \leq h$. First, suppose that there exists some $1 \leq i \leq |I|$ with $h p_{(i\mathbin{:}I)} \leq i\alpha$. Take any $J \supseteq I$. If $|J| > h$, we have $J \in \mathcal{U}$ by Lemma \ref{h}. If $|J| \leq h$, we have
\[
|J|p_{(i\mathbin{:}J)} \leq hp_{(i\mathbin{:}J)} \leq hp_{(i\mathbin{:}I)} \leq i \alpha,
\]
which proves that $J \in \mathcal{U}$. Since this holds for all $J \supseteq I$, we have $I \in \mathcal{X}$.

Next, suppose that there is no $1 \leq i \leq |I|$ with $h p_{(i\mathbin{:}I)} \leq i \alpha$. Let $J = I \cup K_j$ for some $0\leq j \leq h$ such that $|J| = h$. If $1 \leq i \leq h-j$, by the assumption we have
\[
|J|p_{(i\mathbin{:}J)} = hp_{(i\mathbin{:}J)} = hp_{(i\mathbin{:}I)} > i \alpha.
\]
If $h-j \leq i \leq h$, since $K_j \subseteq K_h$ we have
\[
|J|p_{(i\mathbin{:}J)} = |K_h|p_{(i\mathbin{:}J)} = |K_h|p_{(i\mathbin{:}K_h)} > i\alpha
\]
because $K_h \notin \mathcal{U}$. Taken together, this implies that $J \notin \mathcal{U}$, so $I \notin \mathcal{X}$.
\end{proof}

\begin{repthm}{\ref{shortcut}}
If $S \neq\emptyset$ we have $d_\alpha(S) = \max_{1\leq u \leq |S|} 1-u+|\{i\in S\colon hp_i \leq u\alpha\}|$.
\end{repthm}

\begin{proof}
We will use
\begin{eqnarray*}
d_\alpha(S) &=& |S| - t_\alpha(S) \\
&=& \min\{|S\setminus I|\colon I \subseteq S, I \notin \mathcal{X}\} \\
&=& \min\{|I|\colon I \subseteq S, S\setminus I \notin \mathcal{X}\}.
\end{eqnarray*}
By Lemma \ref{CTeasy} we have $S \setminus I \in \mathcal{X}$ if and only if for some $1 \leq u \leq |S\setminus I|$ we have $|\{i \in S\setminus I\colon hp_i \leq u\alpha\}| \geq u$, and we may trivially extend the range to $1 \leq u \leq |S|$. Thus, $S \setminus I \notin \mathcal{X}$ if and only if for all such $u$ we have $|\{i \in S\setminus I\colon hp_i \leq u\alpha\}| \leq u-1$. That is, for all such $u$,
\begin{equation} \label{eq_proof_thm1}
|\{i \in S\colon hp_i \leq u\alpha\}| - u + 1 \leq |\{i \in I\colon hp_i \leq u\alpha\}|.
\end{equation}
Denote the left-hand side of (\ref{eq_proof_thm1}) by $g(u)$ and the right-hand side by $f(I, u)$. Let $d = \max_{1 \leq u \leq |S|} g(u)$. Since $0 \leq d \leq |S|$ we can pick $I = S \cap L_j$ for some $0 \leq j \leq m$ such that $|I| = d$. For all $1 \leq u \leq |S|$, we have $f(I,u) \leq |I| = d$. If $f(I,u) < d = |I|$, then $g(u) \leq f(S,u) = f(I,u)$, where the latter step follows by construction of $I$. If $f(I,u) = d$, then $g(u) \leq d = f(I,u)$. We conclude that $I$ satisfies (\ref{eq_proof_thm1}) for all $1 \leq u \leq |S|$. Obviously, (\ref{eq_proof_thm1}) cannot hold for any $I$ with $|I| < d$. We conclude that $d_\alpha(S) = d$.
\end{proof}

\begin{replem}{\ref{lemma_z}}
The maximum of $1 - u + |\{i\in S\colon hp_i \leq u\alpha\}|$ over $1 \leq u \leq |S|$
is attained for some $1\leq u \leq z-m+h+1$, where
\[
z_\alpha = \left\{ \begin{array}{ll}
0, & \textrm{if $h=m$;} \\
\min\big\{ m-h \leq i \leq m \colon hp_{(i)} \leq (i-m+h+1)\alpha \big\}, & \textrm{otherwise.}
\end{array} \right.
\]
\end{replem}

\begin{proof}
Let us first verify that $z$ is well-defined. We write out the definition of $h$ from (\ref{def_h}):
\begin{equation}\label{hommel}
h =\max \big\{i\in\{0,\ldots,m\}\colon ip_{(m-i+j)}>j\alpha, \textrm{ for } j=1,\ldots,i\big\}.
\end{equation}
Since $h$ is a maximum, we have for $h<m$ that there is at least one $1\leq i \leq h+1$
such that $hp_{(m-h+i-1)} \leq (h+1)p_{(m-h+i-1)} \leq i\alpha$, so we have $z=m-h+i-1$ for the smallest such $i$, so $z$ is well-defined.

Next, pick any $z-m+h+1 \leq u \leq m$. To prove the lemma we will show that $g(z-m+h+1) \geq g(u)$, where $g(u)$ is the function we are maximizing. By definition of $h$, we have $hp_{(m-h+i)} > i\alpha$ for all $1\leq i \leq h$, which becomes, after changing variables, $hp_{(i)} > (i-m+h)\alpha$ for all $m-h+1 \leq i \leq m$. In particular, if $h=m$, then $g(u) \leq 0 \leq g(1)$ because $|\{i \in S\colon hp_i \leq u\alpha\}| \leq u-1$, so assume $h<m$. Consider any $1 \leq i \leq m$. If $i \leq z$, then $hp_{(i)} \leq hp_{(z)} \leq (z-m+h+1)\alpha$ by definition of $z$. If $i \geq u+m-h > z$, then $i \geq m-h+1$, so $hp_{(i)} > (i-m+h)\alpha \geq u\alpha$ by definition of $z$. Hence,
\[
|\{i \in S\colon (z-m+h+1)\alpha < hp_i \leq u\alpha\}| \leq u+m-h-z-1,
\]
from which the conclusion follows.
\end{proof}

\begin{replem}{\ref{lemma_Z}}
We have
\begin{enumerate}
\item $d_\alpha(S)=d_\alpha(S \cap Z)$ for all $S$;
\item $Z$ is minimal, i.e.\ if $d_\alpha(S)=d_\alpha(S \cap Y)$ for all $S$, then $Z \subseteq Y$;
\item $d_\alpha(Z)=m-h$;
\item $Z\subseteq B$.
\end{enumerate}
\end{replem}

\begin{proof}
The entire lemma is trivial if $h=m$. Assume $h<m$. In that case we remark that by definition of $h$ we have $h p_{(i)} > (i-m+h)\alpha$ for all $m-h+1 \leq i \leq m$. Thus, for any $i \in S \setminus Z$ we have $hp_i \geq hp_{(z+1)} > (z-m+h+1)\alpha$ since $z \geq m-h$. Now Statement 1 follows directly from Lemma \ref{lemma_z}.

By Statement 1, $d_\alpha(Z) = d_\alpha(Z \cap \{1,\ldots,m\}) = d_\alpha(\{1,\ldots,m\}) = m-h$, where the final equality holds by definition of $d_\alpha$ and Lemma \ref{h}. This proves Statement 3.

Now take any $Y$ that has the property of Statement 2. Consider $S = Y \cap Z$. We have
\begin{equation} \label{eq_proof_z}
m-h = d_\alpha(Z) = d_\alpha(S) = \max_{1 \leq u \leq |S|} \big\{1-u+|\{i \in S\colon hp_i \leq u\alpha\}| \big\}.
\end{equation}
By the definition of $z$ as a minimum we have that for all $m-h \leq j < z$ that \mbox{$h p_{(j)} > (j-m+h+1)\alpha$}. Therefore, $|\{i \in S\colon hp_i \leq (u-m+h+1)\alpha\}| < u$, for all $m-h \leq u < z$. Changing variables, we obtain $|\{i \in S\colon hp_i \leq u\alpha\}| < u+m-h-1$ for all $1 \leq u < z-m+h+1$, so that $1-u+|\{i \in S\colon hp_i \leq u\alpha\}| < m-h$ for all $1 \leq u < z-m+h+1$. Therefore, and by Lemma \ref{lemma_z}, the maximum in (\ref{eq_proof_z}) is attained at $u=z-m+h+1$, and we obtain from (\ref{eq_proof_z}) that
\[
|\{i \in S\colon hp_i \leq (z-m+h+1)\alpha\}| = z,
\]
so that $|S| = |Z|$ and $Z\subseteq Y$. This proves Statement 2.

To prove Statement 4, note that $b=z=0$ if $h=m$. If $h < m$, by definition of $h$ there is a $1\leq j \leq h+1$ such that $(h+1)p_{(j+m-h-1)} \leq j\alpha$. Let $z' = j+m-h-1$. Then
\[
hp_{(z')} \leq (h+1)p_{(z')} \leq (z' -m+h+1)\alpha,
\]
so $z\leq z'$ by definition of $z$. Now,
\begin{eqnarray*}
mp_{(z')} &=& \frac{m(h+1)p_{(z')}}{(h+1)} \leq \frac{m(z' -m+h+1)}{(h+1)z'}z'\alpha \\ &=& \frac{mz' - m(m-h-1)}{mz' - z'(m-h-1)}z'\alpha \leq z'\alpha,
\end{eqnarray*}
so $z \leq z' \leq b$ by definition of $b$.
\end{proof}

\begin{replem}{\ref{lemma_BH}}
For any $q \in [0,1]$, we have $\alpha q_\alpha(B_{q}) \leq \hat\pi_\alpha q$, with equality if and only if $\hat\pi_\alpha q=0$.
\end{replem}

\begin{proof}
If $b_q = 0$ or $\alpha=0$, then $\alpha q_\alpha(B_q) = 0$ by definition.
If $\hat\pi_\alpha=0$, then $\mathcal{X} = \mathcal{I}$, so $q_\alpha(B_q) = 0$.
If $q=0$, then $p_i=0$ for all $i \in B_q$, so $\mathcal{X}$ contains every nonempty subset of $B_q$, so $q_\alpha(B_q) = 0$.
Otherwise, pick any $J \subseteq B_q$ with $|J| \geq b_q \hat\pi_\alpha q/\alpha > 0$. By definition of $B_q$ we have $p_{(b_q)} \leq b_q q/m$, so
\[
hp_{(|J|\mathbin{:}J)} \leq hp_{(b_q)} \leq b_q \hat\pi_\alpha q \leq |J|\alpha.
\]
Hence, $J \in \mathcal{X}$ by Lemma \ref{CTeasy}, and $q_\alpha(B_q) \leq \hat\pi_\alpha q/\alpha - 1/b_q$.
\end{proof}

\begin{replem}{\ref{lemma_h_asymptotic}}
For $m\to\infty$, we have $\hat\pi_\alpha \toP \bar\pi_\alpha$, where
\[
\bar\pi_\alpha = \inf_{0\leq x < 1}\frac{1-P(x\alpha)}{1-x} > 0,
\]
if $P(\alpha) < 1$ and $\bar\pi_\alpha = 0$ otherwise.
\end{replem}

\begin{proof}
If $P(\alpha) = 1$, we have $\mathrm{P}(h=0)=1$ and the second statement follows immediately.

Assume $P(\alpha) < 1$. To prove the first statement we will bound $h$ from above and below before taking the limit. Note that if $i>0$ and $1\leq j \leq i$, we have that the condition $ip_{(m-i+j)} > j\alpha$ from (\ref{hommel}) is equivalent to $|\{1\leq k \leq m\colon ip_k > j\alpha\}| > i-j$.

If $h > 0$, take any $x \in [0,1)$ and let $j = \lfloor hx \rfloor +1$. Then by definition of $h$ we have
\begin{eqnarray*}
|\{1\leq k \leq m\colon p_k > x\alpha\}| &\geq& |\{1\leq k \leq m\colon hp_k > j\alpha\}| \\
&\geq& h-j+1\\
&\geq& h(1-x).
\end{eqnarray*}
Hence,
\[
\inf_{0\leq x<1} \frac{|\{1\leq k \leq m\colon p_k > x\alpha\}|}{1-x} \geq h.
\]

If $0 < h < m$, by definition of $h$ there is some $1 \leq j \leq h+1$ such that
\begin{equation} \label{hmj}
|\{1\leq k \leq m\colon (h+1)p_k > j\alpha\}| \leq h+1-j.
\end{equation}
If $j=h+1$ we have $p_{(m)} \leq \alpha$ and consequently $h=0$ by (\ref{hmj}), so we can assume
$j \leq h$. Take $x=j/(h+1)$. Then $0 < x < 1$ and
\begin{eqnarray*}
|\{1\leq k \leq m\colon p_k > x\alpha\}| &=& |\{1\leq k \leq m\colon (h+1)p_k > j\alpha\}| \\
&\leq& h-j+1\\
&=& (h+1)(1-x).
\end{eqnarray*}
Consequently,
\[
\inf_{0 \leq x<1} \frac{|\{1\leq k \leq m\colon p_k > x\alpha\}|}{1-x} \leq h+1.
\]
If $h=m$, we have $p_{(1)} > 0$, so taking $x=0$ we also have
\[
\inf_{0 \leq x<1} \frac{|\{1\leq k \leq m\colon p_k > x\alpha\}|}{1-x} \leq |\{1\leq k \leq m\colon p_k > 0\}| = m < h+1.
\]

Therefore, since $h=0$ is equivalent to $p_{(m)} \leq \alpha$, we have
\[
(\phi(\hat P_m)- m^{-1})\mathds{1}_{\{p_{(m)} > \alpha\}} \leq \frac hm \leq
\phi(\hat P_m)\mathds{1}_{\{p_{(m)} > \alpha\}},
\]
where $\hat P_m (x) = m^{-1}|\{1\leq k \leq m\colon p_k \leq x\}|$ and \[\phi(f) = \inf_{0\leq x<1} \frac{1-f(x\alpha)}{1-x}.\]

By Lemma \ref{lemma_cont}, below, $\phi$ is continuous in $P$. By the Continuous Mapping Theorem $\phi(\hat P_m)$ therefore converges to $\phi(P)$ since $\hat P_m \toP P$ uniformly by Glivenko-Cantelli. We note that
\[
\mathrm{P}(p_{(m)} \leq \alpha) = P(\alpha)^m \to 0
\]
since $P(\alpha) < 1$, so also $h/m$ converges to $\phi(P)$ by Slutsky's Theorem.
\end{proof}

\begin{lemma} \label{lemma_cont}
Let $F$ be the space of weakly increasing functions from $[0,1]$ to $[0,1]$ under the uniform norm. The functional $\phi\colon F \to [0,1]$ given by
\[
\phi(f) = \inf_{0\leq x<1} \frac{1-f(x\alpha)}{1-x}
\]
is continuous in any $f$ with $f(\alpha)<1$.
\end{lemma}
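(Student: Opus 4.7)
The plan is to exploit the hypothesis $f(\alpha)<1$ to cut off the bad part of the domain of the infimum, namely $x$ close to $1$. On the remaining compact interval the weight $(1-x)^{-1}$ is bounded, so uniform closeness of $f$ and $g$ translates directly into uniform closeness of the quotients $(1-f(x\alpha))/(1-x)$ and $(1-g(x\alpha))/(1-x)$, and continuity of $\phi$ drops out.

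Concretely, I would set $\eta=1-f(\alpha)>0$ and restrict attention to $g\in F$ with $\|f-g\|_\infty<\eta/2$. Monotonicity then gives, for both $h\in\{f,g\}$, that $h(x\alpha)\le h(\alpha)\le 1-\eta/2$ for every $x\in[0,1]$, hence $(1-h(x\alpha))/(1-x)\ge (\eta/2)/(1-x)$. Choosing $x_0=1-\eta/4$, the right-hand side exceeds $2$ on $[x_0,1)$. Meanwhile at $x=0$ we have $(1-h(0))/1\le 1$, so $\phi(h)\le 1<2$, which shows that $[x_0,1)$ contributes nothing to the infimum and
\[
\phi(h)=\inf_{x\in[0,x_0]}\frac{1-h(x\alpha)}{1-x}.
\]

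The remaining step is pointwise on the compact interval $[0,x_0]$, where $(1-x)^{-1}\le 4/\eta$. For every such $x$,
\[
\left|\frac{1-f(x\alpha)}{1-x}-\frac{1-g(x\alpha)}{1-x}\right|=\frac{|f(x\alpha)-g(x\alpha)|}{1-x}\le \frac{4}{\eta}\,\|f-g\|_\infty.
\]
Applying the elementary inequality $|\inf_x A(x)-\inf_x B(x)|\le \sup_x|A(x)-B(x)|$ to the two quotients on $[0,x_0]$ then yields $|\phi(f)-\phi(g)|\le (4/\eta)\|f-g\|_\infty$, proving continuity, and in fact local Lipschitz continuity with constant $4/\eta$, at $f$. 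The only delicate step is the cutoff: the ability to replace the infimum over $[0,1)$ by one over a compact subinterval is driven entirely by the assumption $f(\alpha)<1$, which supplies the quantitative lower bound $\eta/2$ on the numerator uniformly in $x$; without it, a small uniform perturbation of $f$ near $\alpha$ could alter the quotient by an arbitrarily large amount as $x\to 1$, and $\phi$ would genuinely fail to be continuous.
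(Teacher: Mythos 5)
Your proof is correct and follows essentially the same route as the paper's: both use $f(\alpha)<1$ together with monotonicity to show the quotient exceeds $\phi$ near $x=1$, so the infimum can be restricted to a compact interval where $(1-x)^{-1}$ is bounded and uniform closeness of $f$ and $g$ transfers to the quotients. Your version additionally extracts an explicit local Lipschitz constant $4/\eta$, but the underlying argument is the same.
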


\begin{proof}
Note that indeed $0 \leq \phi(f) \leq 1-f(0) \leq 1$ for every $f \in F$. We show that $\phi$ is continuous in $f$. Let $c = 1-f(\alpha) > 0$. Given $0<\eps\leq 1$, take $\delta = c\eps/2$ and pick any $g$ with $\|g-f\|_\infty < \delta$. If $x \geq 1-c/2$ we have
\[
\frac{1-g(x\alpha)}{1-x} \geq \frac{1-g(\alpha)}{c/2} > \frac{1-f(\alpha)-\delta}{c/2} = \frac{c-\delta}{c/2} = 2 - \eps \geq 1 \geq \phi(g),
\]
so
\[
\phi(g) = \inf_{0\leq x<1-c/2} \frac{1-g(x\alpha)}{1-x}.
\]
In particular the same holds for $g=f$ as well. Since for $0 \leq x < 1-c/2$ we have
\[
\Big|\frac{g(x\alpha)- f(x\alpha)}{1-x}\Big| < \frac{\delta}{c/2} = \eps,
\]
we have $|\phi(g) - \phi(f)| < \eps$ as desired.
\end{proof}

\begin{repthm}{\ref{thm_scalability}}
Let $q \in [0,1]$ and suppose $P$ is Simes-detectable at $q\alpha$. Then there is a sequence of sets $(J_m)_{m\geq1}$ such that $J_m \subseteq \{1,\ldots,m\}$, and $q_\alpha(J_m) \leq \bar\pi_\alpha q$ for all $m$, and $\lim_{m\to\infty} \mathrm{P}(|J_m|/m \geq y) = 1$ for some $y>0$.
\end{repthm}

\begin{proof}
Throughout the proof we make dependence of quantities on $m$ explicit.

Take $J_m = B_{\tilde\alpha_m, m}$, where $\tilde\alpha_m = 1$ if $\hat\pi_{\alpha,m}=0$ and $\tilde\alpha_m = q\alpha\bar\pi_\alpha /\hat\pi_{\alpha,m}$ otherwise. By Lemma \ref{lemma_BH}, and noting that $q_{\alpha,m}(B_0)=0$ for all $\alpha$, we have $q_{\alpha,m}(J_m) \leq \bar\pi_\alpha q$.

We have
\begin{eqnarray*}
|J_m|\ =\ b_{\tilde\alpha,m} &=& \max\{1\leq i \leq m\colon mp_{(i\mathbin{:}\{1,\ldots,m\})} \leq i\tilde\alpha_m\} \\
&=& \max\{1\leq i\leq m\colon |\{1 \leq k \leq m\colon p_k \leq i\tilde\alpha_m/m\}| \geq i\},
\end{eqnarray*}
so that
\begin{eqnarray*}
\frac{|J_m|}{m} &=& \max\{x \in \{1/m, \ldots, m\}\colon \hat P_m(x\tilde\alpha_m) \geq x\} \\
&\geq& \max\{0 \leq x \leq 1\colon \hat P_m(x\tilde\alpha_m) \geq x + 1/m\}.
\end{eqnarray*}

If $\bar\pi_{\alpha,m} = 0$ we have $P(\alpha) = 1$ by Lemma \ref{lemma_h_asymptotic}, so $\mathrm{P}(\hat\pi_{\alpha,m}=0) =1$ and therefore $|J_m|/m \toP 1$ as $m\to\infty$.

Otherwise, $\bar\pi_{\alpha,m} > 0$ and there is some $0 \leq x < 1$ such that $P(xq\alpha) > x$ by Simes-detectability and Lemma \ref{lemma_h_asymptotic}. Fix any $x < y < z < P(xq\alpha)$ and \mbox{$0 < \eps < P(xq\alpha)-z$}. Let $E_m$ be the event that $x\hat\pi_{\alpha,m} \leq y\bar\pi_\alpha$, and $F_m$ the event that $\|\hat P_m - P\|_\infty < P(xq\alpha) - z - \eps$. Since $E_m$ implies that $xq\alpha \leq y\tilde\alpha_m$, we have under $E_m$ and $F_m$ that
\[
\hat P_m(y\tilde\alpha_m) > P(y\tilde\alpha_m) - P(xq\alpha) + z + \eps \geq z + \eps > y + \eps.
\]
By Lemma \ref{lemma_h_asymptotic} and Glivenko-Cantelli, we get that
\[
\mathrm{P}(E_m \cap F_m) \geq \mathrm{P}(E_m)+\mathrm{P}(F_m)-1\to 1
\]
as $m\to\infty$. Hence, for all big enough $m \geq 1/\eps$ we have $\mathrm{P}(E_m \cap F_m)>0$, and we find for $m\to\infty$ that
\begin{eqnarray*}
\mathrm{P}(|J_m|/m \geq y) &\geq& \mathrm{P}(\hat P_m(y\tilde\alpha_m) \geq y + \eps) \\
&\geq& \mathrm{P}(\hat P_m(y\tilde\alpha_m) \geq y + \eps\mid E_m \cap F_m)\mathrm{P}(E_m \cap F_m) \\
&=& \mathrm{P}(E_m \cap F_m) \to 1
\end{eqnarray*}
as desired.
\end{proof}

For the next proof we make use of a result due to \cite{Hildebrandt1958}, which we restate here in a variant for convergence in probability. The proof is a simple reworking of the proof of \citeauthor{Hildebrandt1958} and is omitted.

\begin{lemma}[Hildebrandt] \label{hildebrandt}
Let $X_{m,n}$ be random variables weakly stochastically decreasing in $n$ for every $m$. If
\[
\plim_{n\to\infty} \plim_{m\to\infty} X_{m,n} = \plim_{m\to\infty} \plim_{n\to\infty} X_{m,n} = c
\]
then
\[
\plim_{(m,n)\to\infty} X_{m,n} = c.
\]
\end{lemma}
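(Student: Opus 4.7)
The plan is to establish $\plim_{(m,n)\to\infty}X_{m,n}=c$ by bounding $\mathrm{P}(X_{m,n}>c+\eps)$ and $\mathrm{P}(X_{m,n}<c-\eps)$ separately, and for each of these bounds I would combine the stochastic monotonicity in $n$ with one of the two iterated limits. This is the natural adaptation of Hildebrandt's deterministic double-limit argument, where the two directions ``above $c$'' and ``below $c$'' are handled by the two iterated limits with monotonicity doing the work in each case.

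For the upper tail, write $c_n=\plim_{m\to\infty}X_{m,n}$; stochastic ordering passes to the in-probability limit as $m\to\infty$, so $c_n$ is a weakly non-increasing (constant, since $\plim_n c_n=c$) sequence with $c_n\to c$. Given $\eps,\eta>0$, I would fix $N$ so that $c_N<c+\eps/2$. Convergence in probability of $X_{m,N}$ to $c_N$ then gives $\mathrm{P}(X_{m,N}>c+\eps)<\eta$ for all $m$ large enough, and stochastic monotonicity in $n$ extends this to
\[
\mathrm{P}(X_{m,n}>c+\eps)\ \leq\ \mathrm{P}(X_{m,N}>c+\eps)\ <\ \eta
\]
for every $n\geq N$.

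For the lower tail, let $d_m=\plim_{n\to\infty}X_{m,n}$, which by hypothesis satisfies $d_m\to c$. The key observation is that for each $m$, stochastic monotonicity together with $X_{m,n}\to d_m$ in probability forces $X_{m,n}\geq d_m$ almost surely for \emph{every} $n$: for any $t<d_m$ the sequence $\mathrm{P}(X_{m,n}>t)$ is weakly non-increasing in $n$ and tends to $1$ as $n\to\infty$, hence equals $1$ throughout. Choosing $M$ so that $d_m>c-\eps$ for all $m\geq M$ then gives $\mathrm{P}(X_{m,n}<c-\eps)=0$ uniformly in $n$ for $m\geq M$. Combining with the upper-tail bound yields $\mathrm{P}(|X_{m,n}-c|>\eps)<\eta$ whenever $m$ and $n$ are both large enough.

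The main obstacle is the lower-tail step: the argument relies on the slightly subtle consequence of stochastic monotonicity that a weakly non-increasing sequence of probabilities which tends to $1$ must already equal $1$, so that the stochastic lower bound $d_m$ holds pointwise in $n$ rather than only in the limit. A minor preliminary issue is the interpretation of the inner plims $c_n$ and $d_m$ as constants, which follows from the hypothesis that the outer plims equal a constant $c$; once this is settled, the rest of the proof is a direct translation of the deterministic Hildebrandt argument.
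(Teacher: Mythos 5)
The paper itself omits this proof, describing it only as ``a simple reworking of the proof of Hildebrandt,'' and your proposal is a correct version of exactly that reworking: the upper tail is handled by the $m$-then-$n$ iterated limit plus monotone transfer from a fixed $N$ to all $n\geq N$, and the lower tail by the correct observation that a weakly non-increasing sequence of probabilities tending to $1$ must be identically $1$, so that $X_{m,n}\geq d_m$ almost surely for every $n$. The only loose end is your parenthetical claim that the inner limits $c_n$ and $d_m$ are constants --- the hypothesis only forces them to converge in probability to $c$, not to be degenerate --- but the argument survives verbatim if they are random variables $Y_N$ and $D_m$: replace $c_N<c+\eps/2$ by $\mathrm{P}(Y_N>c+\eps/2)<\eta/2$, and replace $d_m>c-\eps$ by the stochastic bound $X_{m,n}\geq_{\mathrm{st}}D_m$ (obtained from the same monotonicity-of-$\mathrm{P}(X_{m,n}>t)$ argument) together with $\mathrm{P}(D_m\leq c-\eps)\to 0$.
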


\begin{repthm}{\ref{thm:consistency}}
Under assumptions (\ref{ass:alternative}) and (\ref{ass:largeS}) we have, for $0<\alpha<1$,
\[
\plim_{(m,n)\to\infty} q_{\alpha, m,n}(S_m) = \gamma_S.
\]
\end{repthm}

\begin{proof}
We note that $q_{\alpha, m, n}(S_m)$ is weakly increasing in all $p_{i,n}$, and that all $p_{i,n}$ are weakly stochastically decreasing in $n$ by the assumption that $P_n^S(X)$ is weakly increasing in $n$, so $q_{\alpha, m,n}(S_m)$ is weakly stochastically decreasing in $n$. To prove the theorem, by Lemma \ref{hildebrandt} it suffices to show that the iterated limits are both equal to $\gamma_S$. We may assume $|S_m| > 0$.

We find upper and lower bounds for $q_{\alpha,m,n}(S_m)$. Assume $h_{m,n} > 0$. We can rewrite the statement of Theorem \ref{shortcut} as
\begin{equation} \label{rewrite d}
d_{\alpha,m,n}(S_m) = \max_{u \in (0, \infty)} \big( |\{i \in S_m\colon h_{m,n}p_{i,n} \leq u\alpha\}| - \lceil u \rceil + 1 \big).
\end{equation}
To see this, let $g(u) = |\{i \in S_m\colon h_{m,n}p_{i,n} \leq u\alpha\}| - \lceil u \rceil + 1$ and note that $g(\lceil u \rceil) \geq g(u)$ so that $g(u)$ is always maximized at an integer value of $u$. At integer values the new expression yields the same maximum as the expression in Theorem \ref{shortcut}. Note that we may freely change the range of $u$ from $(0, |S_m|)$ to $(0,\infty)$ since $g(u) \leq g(|S_m|)$ for $u > |S_m|$. Writing $u= xh_{m,n}/\alpha$, replacing maximum by supremum, we see that $q_{\alpha,m,n}(S_m)$ lies between
\begin{equation} \label{Qmn}
Q_{m,n} = 1-\sup_{x \in (0, \infty)} \Big(\hat P^S_{m,n}(x) - \frac{xh_{m,n}}{\alpha|S_m|}\Big)
\end{equation}
and $Q_{m,n} + 1/|S_m|$, where $\hat P^S_{m,n}(x) = |\{i \in S_m\colon p_{i,n} \leq x\}|/|S_m|$. Note that the same bounds on $q_{\alpha,m,n}(S_m)$ also hold if $h_{m,n}=0$. The maximum in (\ref{Qmn}) is attained for $x \leq 1$ since $P^S_{m,n}(x)=1$ for $x\geq 1$. Since $\hat P^S_{m,n}(x)$ is weakly increasing in $x$, we can therefore take $x \in [0,1]$ instead.

We start by investigating the case where $m\to\infty$ first. By Slutsky's Theorem and Assumption (\ref{ass:largeS}), the limit of $q_{\alpha,m,n}(S_m)$ is identical to that of $Q_{m,n}$. We have that $\hat P^S_{m,n}(x) \toP P^S_{n}(x)$ uniformly in $x$ by Glivenko-Cantelli, and $h_{m,n}/|S_m| \toP \bar\pi_n/c$ by Slutsky's Theorem, Lemma \ref{lemma_h_asymptotic} and Assumption (\ref{ass:largeS}). Consequently, also $\hat P^S_{m,n}(x) - xh_{m,n}/(\alpha|S_m|)$ converges uniformly over $x \in (0,1]$. Since the supremum is a continuous function, we may exchange supremum and limit by the Continuous Mapping Theorem. We obtain
\begin{equation} \label{mlimit}
\plim_{m\to\infty} q_{\alpha,n}(S_m) = 1-\sup_{x \in [0, 1]} \Big(P^S_{n}(x) - \frac{x\bar\pi_{n}}{c\alpha}\Big).
\end{equation}

To let $n\to\infty$ next, we start by investigating $\lim_{n\to\infty} \bar\pi_{n}$. Let $P(x) = 1-\gamma + x\gamma$, so that $P_n(x) \to P(x)$ (pointwise) for all $0<x\leq 1$, and $P_n(x) \leq P(x)$ for all $n$ and all $0\leq x\leq 1$. If $\gamma > 0$ we have $P_n(\alpha) \leq P(\alpha) < \alpha$, so that by Lemma \ref{lemma_h_asymptotic}, we have the lower bound
\[
\bar\pi_n \ =\ \inf_{0\leq x < 1} \frac{1-P_n(x\alpha)}{1-x}\ \geq\
\inf_{0\leq x < 1} \frac{1-P(x\alpha)}{1-x}\ =\ \gamma,
\]
and the same lower bound also holds if $\gamma=0$. For an upper bound, choose $0<y<1$ arbitrary. Then we have, as $n\to\infty$, by Lemma \ref{lemma_h_asymptotic},
\[
\gamma \ \leq\ \bar\pi_n\ \leq\ \inf_{0\leq x < 1} \frac{1-P_n(x\alpha)}{1-x}\ \leq\ \frac{1-P_n(y\alpha)}{1-y}\ \to\ \frac{1-P(y\alpha)}{1-y}\ = \ \frac{\gamma - \gamma y\alpha}{1-y}.
\]
Since $y$ was arbitrary it follows that $\lim_{n\to\infty} \bar\pi_{n} = \gamma$.

Next, we find upper and lower bounds for (\ref{mlimit}). Let $P^S(x) = 1-\gamma_S + x\gamma_S$, so that $P_n^S(x) \to P^S(x)$ (pointwise) for all $0<x\leq 1$, and $P_n^S(x) \leq P^S(x)$ for all $n$ and all $0\leq x\leq 1$. We have, as $n\to\infty$,
\begin{eqnarray*}
1-\sup_{x \in [0, 1]} \Big(P^S_{n}(x) - \frac{x\bar\pi_{n}}{c\alpha}\Big)
&\geq&
1-\sup_{x \in [0, 1]} \Big(P^S(x) - \frac{x\bar\pi_{n}}{c\alpha}\Big) \\
&=&
\gamma_S + \inf_{x \in [0, 1]} \Big( x \frac{\bar\pi_{n}}{c\alpha}  - x \gamma_S \Big) \\
&=& \gamma_S + \min\Big(0, \frac{\bar\pi_{n}}{c\alpha}  - \gamma_S\Big)
\ \to\ \gamma_S,
\end{eqnarray*}
where the final step follows because $\lim_{n\to\infty} \bar\pi_{n} = \gamma$, and $\gamma/c\alpha \geq \gamma_S$, since
\[
c\gamma_S = \plim_{m\to\infty}\frac{|T \cap S_m|}{m} \leq \plim_{m\to\infty}\frac{|T \cap \{1,\dots,m\}|}{m} = \gamma.
\]
Choose $0<y<1$ arbitrary. We have, as $n\to\infty$,
\[
1-\sup_{x \in [0, 1]} \Big(P^S_{n}(x) - \frac{x\bar\pi_{n}}{c\alpha}\Big)
\ \leq\ 1- \Big(P^S_{n}(y) - \frac{y\bar\pi_{n}}{c\alpha}\Big) \\
\ \to\ \gamma_S + y \Big(\frac{\gamma}{c\alpha} - \gamma_S\Big).
\]
Since $\gamma/c\alpha \geq \gamma_S$, and $y$ was arbitrary, we combine the upper and lower bounds with (\ref{mlimit}) to obtain
\[
\plim_{n\to\infty} \plim_{m\to\infty} q_{\alpha,m,n}(S_m) = \gamma_S.
\]

In the reverse order, letting $n \to \infty$ first, the distribution of the $p$-values converges to $p_{i,\infty} \sim P$ marginally and $p_{i,\infty} \sim P^S$ given $i \in S$.
We now proceed to follow the same arguments as for $m\to\infty$ above. By (\ref{Qmn}) we have that $\lim_{n\to\infty} q_{\alpha, m,n}(S_m)$ lies between
\[
Q_{m,\infty} = 1-\sup_{x \in (0, 1]} \Big(\hat P^S_{m,\infty}(x) - \frac{xh_{m,\infty}}{\alpha|S_m|}\Big)
\]
and $Q_{m,\infty} + 1/|S_m|$, where the distributions of $\hat P^S_{m,\infty}(x)$ and $h_{m,\infty}$ are governed by $P^S$ and $P$, respectively. By Assumption (\ref{ass:largeS}) and Slutsky's Theorem we find that
\[
\lim_{m\to\infty} \lim_{n\to\infty} q_{\alpha, m,n}(S_m) = \lim_{m\to\infty} Q_{m,\infty}.
\]
We have that $\hat P^S_{m,\infty}(x) \toP P^S(x)$ uniformly in $x$ by Glivenko-Cantelli, and $h_{m,\infty}/|S_m| \toP \gamma/c$ by Slutsky's Theorem, Lemma \ref{lemma_h_asymptotic} and Assumption (\ref{ass:largeS}). Consequently, also $\hat P^S_{m,\infty}(x) - xh_{m,\infty}/(\alpha|S_m|)$ converges uniformly over $x \in (0,1]$. Since the supremum is a continuous function, we may exchange supremum and limit by the Continuous Mapping Theorem. We obtain
\[
\plim_{m\to\infty} \plim_{n\to\infty} q_{\alpha,m,n}(S_m) = \gamma_S.
\]
The result of the theorem now follows from Lemma \ref{hildebrandt}.
\end{proof}

\bibliographystyle{chicago}
\bibliography{hommel}

\end{document}